\documentclass[journal,draftclsnofoot,onecolumn,12pt]{IEEEtran}
\usepackage{cite}
\usepackage{amsmath,amssymb,amsfonts,amsthm}
\usepackage{mathrsfs}
\usepackage{array}
\usepackage{tabularx}
\usepackage{multirow}
\usepackage{adjustbox}
\usepackage{lipsum}
\usepackage{booktabs}
\usepackage{color,colortbl}
\definecolor{lavender}{rgb}{0.9, 0.9, 0.98}
\usepackage{mathtools}
\usepackage{subfigure}
\usepackage{mathtools}
\usepackage{graphicx}
\usepackage{bbm}
\usepackage[utf8]{inputenc}
\hyphenation{op-tical net-works semi-conduc-tor}
\usepackage{soul}
\usepackage{algpseudocode}
\usepackage[linesnumbered,ruled,vlined]{algorithm2e}
\SetKwInput{KwInput}{Input}                
\SetKwInput{KwOutput}{Output}
\SetKwInput{KwInit}{Initialization}

\newtheorem{lemma}{Lemma}
\usepackage{textcomp}
\usepackage[normalem]{ulem}
% updated with editorial comments 8/9/2021

\begin{document}

\title{RIS-NOMA integrated low-complexity transceiver architecture: Sum rate and energy efficiency perspective}

\author{Kali Krishna Kota,~\IEEEmembership{Student Member,~IEEE,} Praful D. Mankar,~\IEEEmembership{Member,~IEEE,}
        % <-this % stops a space
% \thanks{This paper was produced by the IEEE Publication Technology Group. They are in Piscataway, NJ.}% <-this % stops a space
% \thanks{Manuscript received April 19, 2021; revised August 16, 2021.}
}

% The paper headers
\markboth{Journal of \LaTeX\ Class Files,~Vol.~14, No.~8, January~2024}%
{Shell \MakeLowercase{\textit{et al.}}: A Sample Article Using IEEEtran.cls for IEEE Journals}

\IEEEpubid{0000--0000/00\$00.00~\copyright~2021 IEEE}
% Remember, if you use this you must call \IEEEpubidadjcol in the second
% column for its text to clear the IEEEpubid mark.

\maketitle
\begingroup
\allowdisplaybreaks
% \vspace{-1.6cm}
\begin{abstract}
This paper aims to explore RIS integration in a millimeter wave (mmWave) communication system with low-complexity transceiver architecture under imperfect channel state information (CSI) assumption. Motivated by this, we propose a RIS-aided system with a fully analog architecture at the base station (BS). However, to overcome the drawback of single-user transmission due to the single RF chain in the analog architecture, we propose to employ NOMA to enable multi-user transmission. For such a system, we formulate two problems to obtain the joint transmit beamformer, RIS phase shift matrix, and power allocation solutions that maximize  sum rate and energy efficiency such that the minimum rate for each user is satisfied. However, both problems are intractable due to 1) the fractional objective, 2) non-convex minimum rate and unit modulus RIS phase shift constraints, and 3) the coupled optimization variables. Hence, we first tackle the fractional objectives of both problems by reformulating the sum rate and energy efficiency maximization problems into equivalent quadratic forms using the quadratic transform. On the other hand, we employ successive convex approximation and the semi-definite relaxation technique to handle the non-convex minimum rate and unit modulus constraint of the RIS phase shifts, respectively. However, the problems remain non-convex due to the coupled optimization variables. Thus, we propose an alternating optimization-based algorithm that iterates over the transmit beamformer, power allocation, and RIS phase shift subproblems. Further, we also show that the quadratic reformulation is equivalent to the weighted mean square error-based reformulation for the case  of  sum rate maximization problem. Our numerical results show that the proposed RIS-NOMA integrated analog architecture system outperforms the optimally configured fully digital architecture in terms of sum rate at low SNR and energy efficiency for a wide range of SNR while still maintaining low hardware complexity and cost. Finally, we present the numerical performance analysis  of the RIS-NOMA integrated low-complexity system for various system configuration parameters. 
\end{abstract}
% \vspace{-.4cm}
\begin{IEEEkeywords}
Reconfigurable Intelligent Surfaces, Analog beamforming, Sum rate, Energy Efficiency, NOMA.
\end{IEEEkeywords}

\section{Introduction}
\label{sec:introduction}
The design of next-generation cellular systems is driven by the need to provide reliable connectivity with high data rates and low latency. To achieve these goals, millimeter wave (mmWave) band (30-300 GHz)-based communication, commonly referred to as mmWave communication \cite{Intro_mmWave_1}, has been considered in the deployment of 5G networks. The abundant spectrum available in this band has the potential to enable high data rates and minimize interference in wireless communication \cite{Intro_mmWave_2}, \cite{Intro_mmWave_3}. However, mmWave signals encounter several propagation challenges due to significant atmospheric absorption and penetration losses. These losses result in unreliable link performance, especially in the absence of line-of-sight (LoS) \cite{Intro_mmWave_2}-\cite{Intro_mmWave_4}. This restricts the mmWave band for short-range communication. Interestingly, the emergence of reconfigurable intelligent surfaces (RIS) has presented an elegant solution to overcome the limitations imposed by such severe path loss. RIS is a planar array with multiple reflecting elements distributed uniformly across its surface \cite{Intro_RIS_1,Intro_RIS_3}. These elements can collectively reorient the incident electromagnetic waves in the desired direction, effectively manipulating the propagation environment. Thus, by leveraging RIS technology, a virtual line-of-sight (LoS) link can be established between the transmitter and receiver such that it can aid communication in the mmWave band. Motivated by this, our paper considers the optimal design of RIS-assisted mmWave communication. Further, we assume passive RIS in order to reduce power consumption and design complexity. 

On the other hand, an essential aspect in the design of multi-antenna communication systems is to carefully choose the interface between the high-dimensional antenna array and  baseband processing unit (BBU). A variety of architectures proposed in the literature for this interface \cite{DB_1,DB_3} can be broadly categorized into 1) fully digital, 2) fully analog, and 3) hybrid. In the fully digital architecture, each antenna at the base station has a dedicated radio frequency (RF) chain \cite{62}, which conveniently allows multi-user precoding in the digital domain to harness maximum spatial multiplexing gain and thus can provide benchmark performance. It can transmit information to as many users as there are the number of antennas \cite{DB_4,DB_5,DB_6,DB_7}. However, this architecture has high implementation cost and power consumption as each antenna needs a dedicated RF chain, which includes mixed signal components like low noise amplifiers, digital to analog converters, and mixers \cite{DB_2}. A promising alternative is the hybrid architecture \cite{13} that divides the transmit processing into digital baseband and analog RF processing resulting in decreased number of RF-chains, thereby reducing the hardware complexity \cite{14}, while still providing reasonable multiplexing gains. Nevertheless, this architecture has the disadvantage of high computational complexity arising due to the feedback overhead between the analog and digital domains \cite{Convex_QP_SDP}. To further reduce the hardware complexity, a fully analog architecture is another solution where a single RF chain connects all the antennas to the BBU \cite{FA_1,FA_2}. {\em However, analog architecture lacks the ability to perform digital domain precoding, which in turn imposes the restriction of single-user transmission.} Additionally, it also has the disadvantage of the unit-modulus constraint associated with the analog beamformer \cite{DB_2}. The mathematical intractability arising due to this unit modulus constraint can be overcome by adopting a double-phase shifter (DPS)-based analog beamformer \cite{DPS_Ref1,DPS_Ref2}. The use of DPS not only simplifies the mathematical formulation but also increases link performance in terms of the antenna array gain. Next, in order to overcome the disadvantage of single-user transmission, we can employ \textit{non-orthogonal multiple access} (NOMA) to facilitate multi-user transmission and obtain multiplexing gain. Specifically, employing power-domain NOMA can exploit different user channel conditions and transmit information corresponding to multiple users at different power levels using superposition coding so that each user can decode its corresponding message using successive interference cancellation (SIC) \cite{19,20}. 
Since the multi-user signal superpositioning is done in the BBU, NOMA can facilitate the multi-user transmission using a single RF chain-based fully analog architecture \cite{61,40}. As a result, the usage of NOMA will improve spectral efficiency \cite{21,22} by unveiling the spatial multiplexing gain in the fully analog architecture. Besides, the advantage of NOMA over orthogonal multiple access (OMA) is well-proven in the literature in terms of user fairness, spectral efficiency, and energy efficiency \cite{23,24}. Nonetheless, the performance of NOMA depends on the disparity in the channel conditions observed by users. This is where RIS integration can be useful, as it can partially control the propagation environment of wireless channels. Such flexibility of controlling channel can aid in optimizing the performance of NOMA-based system \cite{RISmmWaveNOMA_RobSchober},\cite{RIS_NOMA_Interplay_RobSchober,NOMA_NGMA_6G_RobSchober}. \textit{Further, it is to be noted that careful integration of RIS can dramatically compensate for the performance loss of the fully analog architecture while still maintaining low hardware complexity and cost/power consumption. Motivated by this, we focus on the optimal design of the RIS-assisted NOMA-mmWave system with a fully analog architecture in this paper.}

\subsection{Related works}  
%%%%%%%%%%%%%%%%%%%%%%%%%%%%%%%%%%%%%%%%%%%%%   RIS and NOMA   %%%%%%%%%%%%%%%%%%%%%%%%%%%%%%%%%%%%%%%%%%%%%%%
% \subsubsection{RIS and NOMA}
The design and performance of RIS-aided NOMA communication systems has been extensively studied in the literature. For example, the authors of \cite{RIS_NOMA1_MarcoDR} and \cite{RIS_NOMA2_DC} focus on the optimal beamforming problem of the assisted NOMA-MISO downlink system. The former achieves the optimal design by maximizing the minimum decoding SINR to ensure rate performance and user fairness, whereas the latter aims to minimize the total transmit power. Both works decouple the user ordering and the beamforming problems and then solve the transmit beamformer and phase shift matrix subproblems iteratively. Further, in order to obtain the solution to this joint beamformer problem, a unified algorithm based on block coordinate descent (BCD) has been proposed for \cite{RIS_NOMA1_MarcoDR}, and difference-of-convex (DC)-based method for \cite{RIS_NOMA2_DC}. Similarly, the authors of \cite{RIS_NOMA3_HyNOMA} propose the design of a RIS-aided MISO system with NOMA that pairs well-separated users. Next, the joint transmit beamformer and RIS phase shift problem are solved using an alternating optimization-based technique. 
The above-mentioned works numerically show that integrating RIS into NOMA-based systems improves the system performance significantly. 
%%%%%%%%%%%%%%%%%%%%%%%%%%%%%%%%%%%%%%%%%%   RIS, mmWave, and sum rate    %%%%%%%%%%%%%%%%%%%%%%%%%%%%%%%%%%%%%%%%%%%
% \subsubsection{RIS and mmWave}
Similarly, the authors of \cite{RIS_mmWave2_HyB} consider integrating RIS into the mmWave downlink system with hybrid beamforming architecture. To jointly optimize the hybrid beamformer and RIS phase shift matrix of such a system, the authors formulate a power minimization problem. More specifically, a penalty-based algorithm is proposed where the beamforming and RIS phase shift matrix subproblems are solved iteratively using BCD. Further, they propose three different techniques to handle the unit modulus constraints, namely, 1) alternating optimization, 2) Riemannian conjugate gradient algorithm, and 3) successive convex approximation-based algorithm. The authors of \cite{RIS_mmWave2_SR} minimize the sum MSE of the symbols decoded of a RIS-aided MU-MIMO system with a hybrid beamforming architecture. Similar to \cite{RIS_mmWave2_HyB}, the authors propose an MM-based algorithm where they tackle the unit modulus constraint of the beamformer and the RIS phase shift matrix using the accelerated Riemannian gradient algorithm. Additionally, they also propose an enhanced regularized zero-forcing scheme that enables performance analysis in both low and high SNR regimes.   %%%%%%%%%%%%%%%%%%%%%%%%%%%%%%%%%%%%%%%%%%   RIS, mmWave, and energy efficiency   
%%%%%%%%%%%%%%%%%%%%%%%%%%%%%%%%%%%%%%%%%%%
Next, the authors in \cite{EE_mmWave_RIS_HyB} considered an energy-efficient RIS-aided mmWave multi-user MIMO downlink system by considering a hybrid beamforming architecture. Specifically, they adopted a sequential approach to obtain the transmit power and the number of active RF chains for a given transmit precoder, combiner, and RIS phase shift matrix.

%%%%%%%%%%%%%%%%%%%%%%%%%%%%%%%%%%%%%%%   RIS, NOMA, mmWave, and sum rate    %%%%%%%%%%%%%%%%%%%%%%%%%%%%%%%%%%%%%%%%
% \subsubsection{RIS, NOMA and mmWave}
Further, \cite{RIS_NOMA_mmWave_HyB_MarcoDR} analyzes a RIS-aided mmWave-NOMA downlink system with a hybrid architecture. The authors focus on maximizing the achievable  sum rate   while ensuring minimum rate constraints for each user. To solve this problem, the authors propose an alternating optimization algorithm that solves the RIS phase shift, hybrid beamforming, and power allocation problem subproblems iteratively. Specifically, they solve the RIS phase shift subproblem and the hybrid beamforming subproblem using alternating manifold optimization and successive convex approximation (SCA)-based methods. Similarly, in \cite{RIS_NOMA_mmWave_Letter}, the authors solve the joint transmit beamformer, phase shift matrix, and power allocation problem based on alternating optimization and SCA that maximizes the  sum rate. Furthermore, the authors of \cite{RIS_NOMA_mmWave_UL} focus on maximizing the energy efficiency of a RIS-aided mmWave NOMA uplink network with an analog beamforming architecture. The authors utilize BCD to solve the power allocation and beamforming problems jointly. In addition, \cite{RIS_NOMA_mmWave_Lens} presents the joint beamformer, phase shift matrix, and power allocation problem that maximizes the weighted  sum rate   of a massive MIMO system with multiple single-antenna users. The proposed system model adopts a discrete lens array that is often referred to as the continuous aperture phased (CAP)-MIMO architecture. 
%%%%%%%%%%%%%%%%%%%%%%%%%%%%%%%%%%%%%%%   RIS, NOMA, mmWave, and energy efficiency   %%%%%%%%%%%%%%%%%%%%%%%%%%%%%%%%%%%%%%%%
Finally, the authors of \cite{EE_vs_SE} presented the energy efficiency vs. spectral efficiency tradeoff in a RIS-assisted mmWave NOMA downlink system. To achieve this, the authors formulated a multi-objective optimization problem that maximizes the sum rate and minimizes the power consumption in order to obtain the joint power allocation, active and passive beamforming solution.

%%%%%%%%%%%%%%%%%%%%%%%%%%%%%%%%%%%%%%%   Summary of related works   %%%%%%%%%%%%%%%%%%%%%%%%%%%%%%%%%%%%%%%%%
% \subsubsection{Summary of related works}

In summary, the aforementioned works show that integration of RIS into (with/without NOMA) communication systems provides performance enhancement, however a majority of these works mostly consider fully digital or hybrid beamforming architecture. These works focus on the optimal design of joint beamforming, RIS phase shifts, and power allocation such that the performance of key metrics such as  sum rate, energy efficiency, or user fairness is maximized. Further, the above works propose algorithms and provide numerical insights into the system's performance, verifying that RIS enhances the performance of communication systems. \textit{Unlike the above works, in this paper, we take a step beyond and investigate if this huge RIS gain can relax the architectural complexity of the transmitter.} Thus, it is meaningful to explore the benefits of integrating RIS into less complex systems such as fully analog architecture. In this context, to the best of our knowledge, there is no work studying the performance of RIS-aided mmWave NOMA networks with a fully analog beamforming architecture, which is set to be the main goal of this paper.

\subsection{Contributions}
%%%%%%%%%%%%%%%%%%%%%%%%%%%%%%%%%%%%%%%   How different is our work?   %%%%%%%%%%%%%%%%%%%%%%%%%%%%%%%%%%%%%%%
In this paper, we investigate the performance of a low-complexity transceiver system with a fully analog architecture that is integrated with RIS and is capable of performing NOMA transmission. In particular, we harness the capabilities of  1) NOMA to enable multi-user transmission (and thus improve spatial multiplexing capabilities of a single RF-chain system) and 2) RIS to enhance signal reception quality and improve performance. We study the sum rate  and energy efficiency performances of such a RIS-NOMA-aided fully analog architecture 
under imperfect knowledge of CSI. Towards this, we formulate sum rate  and energy efficiency maximization problems to obtain the joint power allocation, transmit beamformer, and RIS phase shift matrix solution while ensuring minimum rates to all the users. The main contributions of this paper are summarized below
\begin{itemize}
    \item The sum rate  and energy efficiency maximization problems are non-convex due to 1) the fractional nature of the SINR term, 2) the non-convex minimum rate and unit modulus RIS phase shift constraints, and 3) the coupled optimization variables. Thus, to tackle these problems, we propose a three-step framework: 1) simplify the objective function by using the quadratic transform to decouple the numerator and denominator in SINR terms, 2) employ SCA and SDR to relax the non-convex minimum rate constraint and the unit modulus phase shifts constraints, respectively, and 3) adopt alternating optimization (AO)-based iterative algorithm to decouple the transmit beamformer, RIS phase shift matrix and power allocation variables.  
    \item We analytically establish the equivalence between the quadratic transform and weighted MSE minimization under the transmit beamformer and the RIS phase shifts sub-problems. 
    \item An extensive numerical analysis reveals the following key system design insights: 
    % \st{1) The proposed system outperforms the fully digital architecture in the low SNR region and can perform on par in the high SNR region if configured appropriately}
    1) The proposed system outperforms the optimally configured fully digital architecture in terms of sum rate at low SNR and in terms of energy efficiency for a wide range of SNR while still maintaining low hardware complexity and cost,
    2) the sum rate performance of the proposed system is limited by noise and CSI error variance in the low and high SNR regions, respectively. These limits can be pushed by increasing the number of RIS elements, 
    % \st{3) the energy efficiency performance is similar for different values of $\sigma_\epsilon^2$,}
    3) the energy efficiency performance is almost independent to the variation of CSI error variance at low SNR,
    and 
    % \st{4) RIS NOMA-based fully analog architecture provides spatial multiplexing gain improvement that can be observed in comparison to the RIS OMA-based scheme.}
    4) RIS NOMA-based fully analog architecture provides significantly higher sum rate  compared to the RIS OMA-based fully analog architecture.
\end{itemize}

%%%%%%%%%%%%%%%%%%%%%%%%%%%%%%%%%%%%%%%%%%%%%%%%%%%%%%%%%%%%%%%%%%%%%%%%%%%%%%%%%%%%%%%%%%%%%%%%%%%%%%%%%%%%%%
{\em Notations:} $a^*$, $\Re\{a\}$, and $|a|$ represent the conjugate, real part, and absolute value of $a$. $\left\lVert \mathbf{a} \right\lVert$ and $\mathbf{a}_i$ are the norm and the $i$-th element of vector $\mathbf{a}$, whereas $\mathbf{A}^T$, $\mathbf{A}^H$, $\|\mathbf{A}\|_F$, ${\rm{trace}}(\mathbf{A})$,  $\mathbf{A}_{i,:}$, $\mathbf{A}_{:,i}$ and $\mathbf{A}_{ij}$ are the transpose, Hermitian, Frobenius norm, trace, $i$-th row, $i$-th column and $ij$-th element of the matrix $\mathbf{A}$, respectively. The notation $\mathbb{C}^{M\times N}$ is the set of  $M \times N$ complex matrices, ${\rm I_M}$ is $M\times M$ identity matrix and $\mathbf{1}_{\rm M}$ is a $M\times 1$ vector with unit elements. $\mathbf{v_A}$ and $\lambda_\mathbf{A}$ are the principal eigenvector and eigenvalue of  $\mathbf{A}$.  $\odot$ is the hadamard product, $\rm{diag}(\mathbf{a})$ is a diagonal matrix such that vector $\mathbf{a}$ forms its diagonal, and $\mathcal{CN}(\boldsymbol{\mu},\mathbf{K})$ denotes complex  Gaussian distribution with mean $\boldsymbol{\mu}$ and covariance matrix $\mathbf{K}$.

\section{System model}
\begin{figure}[h!]
    \centering  \includegraphics[width=.5\columnwidth]{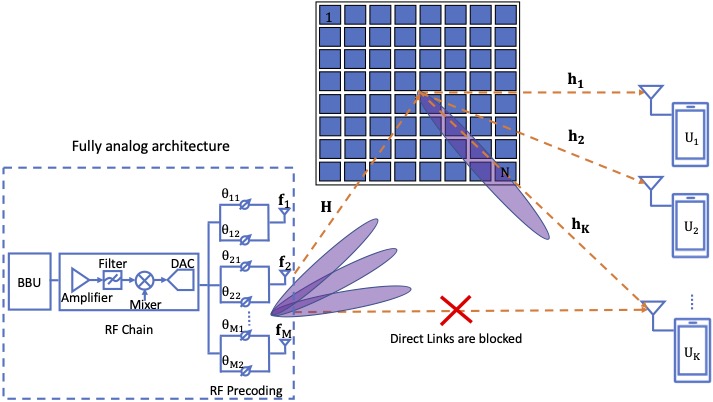}
    \caption{Illustration of the system model}
    \label{fig:enter-label}
\end{figure}
We consider a RIS-aided mmWave-NOMA downlink system with a fully analog architecture wherein BS having $M$ antennas is connected to the BBU via a single RF-chain. 
The BS transmits a superposition coded symbol $x$ to $K$ single antenna users such that 
\begin{equation}\label{SupCoded}
   x = \sum_{k=1}^{K}\sqrt{p_k} s_k, % = \sqrt{\mathbf{p}^T}\mathbf{s}.
\end{equation}
where $ s_k$ is the symbol corresponding to the $k\text{-th}$ user, and $p_k$ is the allocated power to the $k\text{-th}$ user. 
We assume that there is no direct link between the BS and users, and the users receive the signal via a planar RIS that has $N_H$ and $N_V$ elements along the horizontal and vertical directions, respectively, such that $N_H N_V = N$. Thus, the signal received by the $k\text{-th}$ user can be written as 
\begin{align}
    \Tilde{y}_k= \mathbf{h}_{k}^H \mathbf{\Phi} \mathbf{H}\mathbf{f}x+n_k,\label{Rx_Signal} 
\end{align} 
where $\mathbf{H}\in {\mathbb C}^{N \times M}$ is the channel between the BS and RIS, $\mathbf{h}_{k}\in{\mathbb C}^{N \times 1}$ is the channel between the $k\text{-th}$ user and RIS, $n_k  \sim {\mathcal{CN}}(0,\sigma^2)$ is the noise at the $k\text{-th}$ user, and  $\mathbf{\Phi} = \rm{diag(\boldsymbol{\psi})}$ represents the RIS phase shift matrix. Due to the passive RIS assumption, the elements of phase shift vector $\boldsymbol{\psi}$ must satisfy the unit modulus constraint i.e., $|\boldsymbol{\psi}_i| = 1; \forall i = 1, \cdots N$. Next, $\mathbf{f} \in {\mathbb C}^{M\times 1}$ is the analog transmit beamforming vector connected to double phase shifters. Thus, each elements of $\mathbf{f}$ should satisfy 
\begin{align}
    |{\bf f}_{i}| \leq \max_{\theta_{i1}, \theta_{i2} \in [-\pi, \pi]} \quad \left|e^{j\theta_{i1} } + e^{j \theta_{i2}}\right| =2,
    \end{align}
   for $i=1,\dots,M$ where $\theta_{ij}$ is the phase of the $j$-th phase shifter associated with the $i$-th antenna. Further, we also assume the unit norm beamformer i.e., $\|{\bf f}\|^2 = 1$ in order to ensure the total transmit power $P_s$ constraint.

At the receiver end, we assume that each user decodes its corresponding symbol perfectly by employing successive interference cancellation (SIC). Moreover, we consider that the user decoding is performed in the ascending order of the channel gains.
The signal at the $k\text{-th}$ user after perfect SIC is given as
 \begin{align} 
y_k &= \Tilde{y}_k - \sum\nolimits_{i=1}^{k-1}\Tilde{y}_i \nonumber \\
  &= \mathbf{h}_{k}^H \mathbf{\Phi} \mathbf{H}\mathbf{f}\sqrt{p_k}s_k + \sum\nolimits_{i=k+1}^{K} \mathbf{h}_{k}^H \mathbf{\Phi} \mathbf{H}\mathbf{f}\sqrt{p_i} s_i + n_k. \label{PCSI_Received_Signal_AfterSIC}
\end{align} 
In the following section, we explain the channel model adopted in this paper.
\subsection{Channel model}
The mmWave channel is sparse in the angular domain because of the high path-loss and absorption losses incurred at these high-frequency bands. This distinguishes it from the rich scattering environment typically encountered in lower frequency bands. To capture and describe this sparsity, we employ a widely recognized, non-geometrical stochastic model known as the extended Saleh-Valenzuela channel \cite{Intro_mmWave_4}. Using this, the RIS-user and BS-RIS channels are modeled as  
\begin{subequations}
    \begin{align}
      \mathbf{H} &= \sqrt{\frac{N_H N_V}{S_1}} \sum_{j=1}^{S_1}\alpha_{j} \mathbf{a}_N(\theta^r_{j},\phi^r_{j}) \mathbf{a}_M^H(\theta^t_{j})\label{H_ch_mod}, \\  
      \text{and}~~\mathbf{h}_{k} &= \sqrt{\frac{N_H N_V}{S_2}} \sum_{j=1}^{ S_2}\alpha_{jk}\mathbf{a}_N(\theta^t_{jk},\phi^t_{jk}),\label{ch_mod}
\end{align}
\end{subequations}
respectively, where $S_1$ and $S_2$ represent the number of paths between the BS-RIS and RIS-$k$-th user respectively, $\alpha_{j}$ and $\alpha_{jk}$ are the unit variance complex Gaussian gains along the $j$-th path between the BS-RIS, and RIS-$k$-th user, respectively. $\theta^t_{j}, \theta^r_{j}$, and $\phi^r_{j}$ are the azimuthal angle of departure from the BS, the azimuthal angle of arrival at the RIS, and the elevation angle of arrival at the RIS of the $j$-th path. Similarly, $\theta^t_{jk}$, and $\phi^t_{jk}$ are the azimuth and elevation angles of departure from the RIS to the $k$-th user. In \eqref{H_ch_mod}, $\mathbf{a}_N(\theta,\phi)$ is the uniform planar array (UPA) steering vector at the corresponding azimuth and elevation angles of length $N$ and is given by $$\mathbf{a}_N(\theta,\phi) = \frac{1}{\sqrt{N_H N_V}} \left(\mathbf{a}_{N_H} \otimes \mathbf{a}_{N_V}\right),$$ such that $$\mathbf{a}_{N_V}(\theta) =[1~~e^{-j \pi \cos(\phi) \sin(\theta)} \cdots e^{-j (N_V-1) \pi \cos(\phi)\sin(\theta)}]^T,$$ $$\mathbf{a}_{N_H}(\theta,\phi) = [1~~e^{-j \pi \cos(\theta)} \cdots e^{-j (N_H-1) \pi \cos(\theta)}]^T,$$ where $N_H$ and $N_V$ are the number of elements along the horizontal and vertical directions of the UPA. Furthermore, $\mathbf{a}_M(\theta)$ in \eqref{H_ch_mod} is the array steering vector corresponding to the uniform linear array (ULA) of length $M$ at the BS, and is defined as $$\mathbf{a}_M(\theta) = \frac{1}{\sqrt{M}}[1~~e^{-j \pi \cos(\theta)} \cdots e^{-j (M-1) \pi \cos(\theta)}]^T.$$  
\subsection{Channel estimation error model}
\label{CSI_Error_Model}
Essential information of the wireless propagation environment is encapsulated by CSI. Availability of this information at the transmitter helps it to configure the transmission for  enhancing the  link reliability and capacity \cite{26}. The process of acquiring CSI is as follows. The transmitter sends the pilot signal, and the receiver uses the received pilot signal to estimate the  CSI by applying methods such as minimum mean squared error or least squares. It then sends a quantized version of this CSI estimate to the transmitter via a feedback channel in frequency division duplex (FDD) systems or by utilizing channel reciprocity in time division duplex (TDD) systems. However, the CSI obtained by the transmitter may be erroneous due to channel estimation error, quantization error, or feedback channel delay/error \cite{25}-\cite{28}. Due to these reasons, considering that the transmitter receives perfect CSI will be an inaccurate assumption \cite{29}. Therefore, to make our analysis more realistic, we consider a widely accepted Gaussian channel error model \cite{ChannelEstimationError_Gauss} to account for the error in the CSI obtained by the transmitter that is written as 
 \begin{align}\label{CSI_Error_Channel}
\mathbf{H} =  \widehat{\mathbf{H}} +  \mathbf{\Lambda},~~\mathbf{h}_{k} =  \widehat{\mathbf{h}}_{k} +  \mathbf{\Lambda}_{k},
 \end{align}
where $\widehat{\mathbf{H}}~\text{and}~\widehat{\mathbf{h}}_{k}$ are the estimated BS-RIS and RIS-$k$-th user channels respectively, and $\mathbf{\Lambda}_{{:,i}},~\mathbf{\Lambda}_{k}$ are channel estimation error matrices whose entries are assumed to be independent and zero mean complex Gaussian with variance $\sigma^2_{\epsilon}$. By substituting (\ref{CSI_Error_Channel}) in \eqref{PCSI_Received_Signal_AfterSIC}, the signal received at the $k\text{-th}$ user after perfect SIC can be expressed as
\begin{align}
y_k &= \widehat{\mathbf{h}}_{k}^H \mathbf{\Phi} \widehat{\mathbf{H}}\mathbf{f}\sqrt{p_k} s_k + \sum_{i=k+1}^{K} \widehat{\mathbf{h}}_{k}^H \mathbf{\Phi} \widehat{\mathbf{H}}\mathbf{f} \sqrt{p_i} s_i + \mathbf{v}_k + n_k,\label{received_signal_csi}
\end{align}
where
$$\mathbf{v}_k = (\mathbf{\Lambda}_{k}^H \mathbf{\Phi} \widehat{\mathbf{H}} \mathbf{f} + \widehat{\mathbf{h}}_{k}^H \mathbf{\Phi} \mathbf{\Lambda} \mathbf{f} + \mathbf{\Lambda}_{k}^H \mathbf{\Phi} \mathbf{\Lambda}_1 \mathbf{f}) \sum\nolimits_{i=1}^{K}\sqrt{p_i}  s_i,$$ is the residual error term arising due to imperfect CSI (ICSI) estimation. 
The variance of $\mathbf{v}_k$ is given by
\begin{align}
     &\alpha_k = \mathbb{E}[\|\mathbf{v}_k\|^2] = \Big\{ \sigma^2_{\epsilon} (\|\widehat{\mathbf{h}}_{k}\|^2\|\mathbf{f}\|^2 + \|\widehat{\mathbf{H}} \mathbf{f}\|^2) + N\sigma^4_{\epsilon}\|\mathbf{f}\|^2 \Big\} \sum\limits_{i = 1}^Kp_i. \label{ICSI_Power}
\end{align}
\section{Problem Formulation}\label{Problem_Form}
In this subsection, we present the  sum rate   and energy efficiency maximization problems to obtain joint transmit beamformer $\mathbf{f}$, RIS phase shift matrix $\mathbf{\Phi}$, and power allocation $\mathbf{p}$ solutions. 
We begin by defining the signal-to-interference plus noise ratio (SINR) at the $k$-th user using \eqref{received_signal_csi} as 
\begin{align}
       \Gamma_k &= \frac{|\widehat{\mathbf{h}}_{k}^H \mathbf{\Phi} \widehat{\mathbf{H}}\mathbf{f}|^2p_k}{\sigma^2 + \sum\nolimits_{i=k+1}^{K}|\widehat{\mathbf{h}}_{k}^H \mathbf{\Phi} \widehat{\mathbf{H}}\mathbf{f}|^2p_i+\alpha_k},\label{SINR} 
\end{align}
where $\sigma^2$ is the noise power and $\alpha_k$ is the interference power due to ICSI.
Using \eqref{SINR}, the achievable rate of the $k$-th user is given as $R_k = \log_2(1+\Gamma_k)$ bits/sec. In order to ensure a certain quality of service, we consider minimum rate constraint such that the rate $R_k$ received by the $k$-th user is above $R_{k,\rm{th}}$, i.e., $R_k \geq R_{k,\rm{th}}$. This constraint can be rewritten in the form of minimum SINR constraint as $\Gamma_k \geq 2^{R_{k,\rm{th}}} - 1 = \eta_k$. We now present the sum rate  maximization problem formulation that enhances the overall capacity of the system as
\begin{subequations}\label{SR}
\begin{align}
\max_{\mathbf{p},\mathbf{f}, \boldsymbol{\psi}} \quad  &\sum_{k=1}^{K} R_k \label{SR_Obj} \\
\textrm{s.t.} \hspace{.5cm} & |\boldsymbol{\psi}_i| = 1, \forall i = 1, \cdots N,\label{SR_Phi_Con}\\ 
&\|\mathbf{f}\|^2 = 1,~~|\mathbf{f}_i| \leq 2, \forall i=1,\dots,M, \label{SR_f_con}\\
&\sum\nolimits_{k=1}^{K}p_k \leq P_s,~~p_k \geq 0,\forall k = 1,\dots. K, \label{SR_P_con} \\
&\Gamma_k \geq \eta_k,~\forall k = 1,\dots, K, \label{SR_MinRate_Con}
\end{align}
\end{subequations}
where the first constraint \eqref{SR_Phi_Con} is the unit modulus constraint of the RIS phase shift matrix to ensure its passive assumption, \eqref{SR_f_con} ensures that the beamformer is in accordance with the double-phase shifter-based analog beamforming architecture, \eqref{SR_P_con} ensures that the power allocation does not exceed the total available power $P_S$, and \eqref{SR_MinRate_Con} ensures minimum rate constraint of the users.
The optimization formulation given in \eqref{SR} is non-convex due to 1) fractional objective, 2) unit modulus constraint of the RIS phase shift matrix, and 3) minimum rate constraint. Besides, the optimization problem is difficult to solve directly because of the coupled optimization variables. To solve such a problem, we first propose a simple yet effective reformulation to bring the fractional objective into a tractable quadratic form. Next, we relax the non-convex constraints by 1) using the SDR technique to tackle the unit modulus constraint of the RIS phase shift matrix and 2) linearizing the minimum rate constraint by using the first-order Taylor series approximation. Finally, to decouple the optimization variables, we propose an alternating optimization-based algorithm that solves individual subproblems iteratively until the  sum rate   converges. We discuss all the above-mentioned details in Section \ref{SR_Section}.

In addition, we also present an optimization formulation that maximizes the energy efficiency as
\begin{subequations}\label{P_EE}
\begin{align}
\max_{\mathbf{p},\mathbf{f}, \boldsymbol{\psi}} \quad & \frac{\sum_{k=1}^{K} R_k}{\sum_{k = 1}^{K}p_k + P_c}\label{EEMax1}, \\
\textrm{s.t.} \quad &\eqref{SR_Phi_Con}-\eqref{SR_MinRate_Con}
\end{align}
\end{subequations}
where $P_c$ represents the transmitter operational power consumption and can be modeled as \cite{RIS_EE_PcMod} 
\begin{align}
    P_c =  P_{BS} + N P_{RIS},%\eta_k P_s + + K P_{UE}
\end{align}
where  $P_{BS}$ is the power consumed at BS and $P_{RIS}$ is the minimal power consumed by the passive elements of RIS. 
The above optimization problem is non-convex as it falls under the class of fractional programming problems (FP). To solve this problem, we transform the problem into an equivalent quadratic form and use the same strategies mentioned-above to simplify the non-convex constraints which will be explained in detail in Section \ref{EE_Section}.

\section{Problem Reformulation}
The  sum rate   and energy efficiency maximization problems fall under the category of problems classified as the \textit{concave-convex FP class} fractional programming that involve multiple ratios in the objective function. Such problems are NP-hard and thus require exponential running time \cite{FP_Exp} to arrive at a solution. Two popular methods are the Dinklebach \cite{FP_DB} and Chance Cooper \cite{FP_ChanceCoop} methods that solve fractional programming problems by decoupling the numerator and denominator. However, these techniques are more accurate in the case of \textit{single ratio} which corresponds to the single user case in our formulation. Thus, we cannot apply either of these techniques directly to objective functions involving a \textit{sum-of-function-of-ratios}, as is the case for sum rate and energy efficiency objective functions. Hence, we investigate a novel approach that is based on the \textit{quadratic transformation} method proposed in \cite{FP_QT} to decouple the numerator and denominator of the sum-of-function-of-fractions by introducing auxiliary variables. This transform will be presented in detail in the following subsection. 

\subsection{Quadratic Transformation} \label{QT_defining_section}
The quadratic transformation method is a generalization of the Dinklebach method that introduces an auxiliary variable $\mathbf{y}$ to decouple the numerator and the denominator of fractional objective function such that resulting reformulation becomes a concave problem in the optimization variable $\mathbf{x}$ and quadratic in the auxiliary variable $\mathbf{y}$. Moreover, the transformation is unique, such that the optimal value of the original and reformulated objective functions are the same. In this section, we briefly present this transformation proposed in \cite{FP_QT} for completeness. 

Let us define a multidimensional complex multi-ratio fractional programming problem as 
\begin{subequations}
       \begin{align}
       \max_{\mathbf{x}}~~~~&\sum_{k=1}^K f_k \Big(1 + \mathbf{a}_k^H(\mathbf{x}) \mathbf{B}_k^{-1}(\mathbf{x}) \mathbf{a}_k(\mathbf{x}) \Big), \label{QT_Definition_eq1_Obj} \\
        \text{s.t.}~~~~&\mathbf{x} \in \chi \subseteq \mathbb{C}^{d},
    \end{align} \label{QT_Definition_eq1}%
\end{subequations}
    where $\mathbf{a}_k(\mathbf{x}): \mathbb{C}^{n} \to \mathbb{C}^n$, $\mathbf{B}_k(\mathbf{x}): \mathbb{C}^{n} \to \mathbb{S}_{++}^n$, $f_k(\cdot)$ is a sequence of non-decreasing function, and  $n \in \mathbb{N}$. 
The goal is to simplify the ratio $\mathbf{a}_k^H(\mathbf{x}) \mathbf{B}_k^{-1}(\mathbf{x}) \mathbf{a}_k(\mathbf{x})$ in the objective function by introducing auxiliary variables $\nu_k$ such that it reduces to a tractable form as 
\begin{align*}
    2 \Re\{ {\nu}_k^{\star} \mathbf{a}_k(\mathbf{x}) \} - |{\nu}_k|^2 \mathbf{B}_k(\mathbf{x}),
\end{align*}
such that $\nu_k \in \mathbb{C},~\forall k = 1, \cdots, K$ are the auxiliary variables introduced to decouple the numerator and the denominator. 
The above term is maximized with respect to the auxiliary variable $\nu_k$ at $$\nu_k^{\rm{opt}} = \frac{\mathbf{a}_k(\mathbf{x})}{\mathbf{B}_k(\mathbf{x})},$$ such that $$2 \Re\{ ({{\nu}_k^{\rm{opt}}})^* \mathbf{a}_k(\mathbf{x}) \} - |{\nu}_k^{\rm{opt}}|^2 \mathbf{B}_k(\mathbf{x}) = \mathbf{a}_k^H(\mathbf{x}) \mathbf{B}_k^{-1}(\mathbf{x}) \mathbf{a}_k(\mathbf{x}).$$
Using this, we can write the reformulated optimization problem that is a function of the auxiliary variable as 
\begin{subequations}
       \begin{align}     
       \max_{\mathbf{x},\nu_k}~&\sum_{k=1}^K f_k \Big(1 +  2 \Re\{ {\nu}_k^* \mathbf{a}_k(\mathbf{x}) \} - |{\nu}_k|^2 \mathbf{B}_k(\mathbf{x}) \Big), \label{QT_Definition_eq1_ReformObj} \\
        \text{s.t.}~~~~&\mathbf{x} \in \chi \subseteq \mathbb{C}^{d}, \nu_k \in \mathbb{C}.
    \end{align} \label{QT_Definition_Reform_eq1}
\end{subequations} 
The above problem is convex in $\mathbf{x}$ and quadratic in $\nu_k$ and thus can be solved by iterating over $\mathbf{x}$ and $\nu_k$. More specifically, when the numerator is a concave function and the denominator is a convex function, such that the optimization problem falls under the concave-convex FP problem, the iterative algorithm converges to a stationary point as presented in \cite{FP_QT}.
We adopt this transform to solve the optimal beamforming, RIS phase shift matrix, and power allocation subproblems corresponding to both the sum rate and energy efficiency maximization problems, which will be discussed in the following subsections. 
 
It is to be noted that another widely used transform called the weighted mean square error (WMSE) reformulation \cite{WMSE_SR_Equiv} has been extensively used in the literature to solve the  sum rate maximization problem in the context of multi-antenna system design. 
For example, \cite{WMSE_1,WMSE_2,WMSE_3,WMSE_4,WMSE_5} utilizes the WMSE reformations to solve the sum-rate maximization problem under various system settings and also investigate its convergence properties.
This reformulation is a versatile technique and  has also been used as a part of other algorithms, such as Minorization-maximization, to solve non-convex optimization problems. In this paper, we  interestingly observe that WMSE and quadratic transforms are closely related. 
\begin{lemma}
The quadratic transform is an equivalent form of WMSE reformulation for sum rate maximization.
\end{lemma}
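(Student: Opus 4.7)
The plan is to establish the equivalence by (i) exhibiting a one-to-one correspondence between the auxiliary variables of the two reformulations, and (ii) verifying that, under this correspondence, the transmit-beamformer and RIS-phase-shift subproblems share identical stationarity conditions. I will work at the level of a single per-user rate term $\log(1+\Gamma_k)$ and extend by summation.

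First, I would write out both reformulations for a single term $\log(1+\Gamma_k)$. Using the classical rate--WMSE identity, the WMSE reformulation reads
\begin{equation*}
\log(1+\Gamma_k) \;=\; \max_{u_k,\,w_k>0}\,\bigl[\log w_k - w_k\, e_k(u_k) + 1\bigr],
\end{equation*}
where $e_k(u_k) = |u_k|^2(|a_k|^2+B_k) - 2\Re\{u_k^{\ast} a_k\}+1$ is the MSE with receiver $u_k$, $a_k = \sqrt{p_k}\,\widehat{\mathbf{h}}_k^H\boldsymbol{\Phi}\widehat{\mathbf{H}}\mathbf{f}$, and $B_k = \sigma^2 + \sum_{i>k}p_i|\widehat{\mathbf{h}}_k^H\boldsymbol{\Phi}\widehat{\mathbf{H}}\mathbf{f}|^2 + \alpha_k$. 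Closed-form maximization yields the MMSE receiver $u_k^{\star} = a_k/(|a_k|^2+B_k)$ and the weight $w_k^{\star} = (|a_k|^2+B_k)/B_k = 1+\Gamma_k$. Applying the QT template of Section \ref{QT_defining_section} with $f_k=\log$ gives $\log(1+\Gamma_k) = \max_{\nu_k}\log\bigl(1+2\Re\{\nu_k^{\ast}a_k\} - |\nu_k|^2 B_k\bigr)$ with maximizer $\nu_k^{\star} = a_k/B_k$. The central identity $\nu_k^{\star} = w_k^{\star} u_k^{\star}$ then suggests the bijective substitution $\nu_k \leftrightarrow w_k u_k$ as the bridge between the two parameterizations.

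Under this substitution, I would differentiate both reformulated objectives with respect to $\mathbf{f}$ at the iterate where the auxiliary variables have just been updated to their closed-form optima. The QT gradient carries a prefactor $1/\bigl(1+2\Re\{\nu_k^{\ast}a_k\}-|\nu_k|^2 B_k\bigr) = 1/(1+\Gamma_k)$ arising from the outer logarithm, whereas the WMSE gradient of $-\sum_k w_k e_k(u_k,\mathbf{f})$ carries the weight $w_k = 1+\Gamma_k$. A direct computation shows both expressions collapse to
\begin{equation*}
\sum_{k}\,\frac{1}{|a_k|^2+B_k}\bigl[\nabla_\mathbf{f}|a_k|^2 - \Gamma_k\,\nabla_\mathbf{f} B_k\bigr],
\end{equation*}
which is exactly $\nabla_\mathbf{f}\sum_k\log(1+\Gamma_k)$. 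The identical argument applies to $\boldsymbol{\psi}$: since both methods handle the unit-modulus constraint by the same SDR lift to $\boldsymbol{\psi}\boldsymbol{\psi}^H$, the stationarity systems match term by term after the substitution.

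The main obstacle I anticipate is the structural asymmetry between the two subproblems: the QT subproblem retains the outer logarithm and is thus a sum-of-logs of concave quadratics in $\mathbf{f}$, whereas the WMSE subproblem is a plain weighted quadratic minimization. I would address this by exploiting the precise value the log argument takes at the iterate where $\nu_k$ has just been updated --- namely, $1+2\Re\{\nu_k^{\ast}a_k\}-|\nu_k|^2 B_k = 1+\Gamma_k$ exactly --- so differentiating the outer log yields the factor $1/(1+\Gamma_k)=1/w_k^{\star}$ that cancels the WMSE weight under $\nu_k = w_k u_k$. Once this cancellation is verified, the KKT systems generated by the two reformulations coincide at every alternating-optimization step, so both procedures produce the same iterate sequence in $(\mathbf{f},\boldsymbol{\psi})$, establishing the claimed equivalence.
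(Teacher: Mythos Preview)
Your bijection $\nu_k^\star = w_k^\star u_k^\star$ is correct and insightful, but the conclusion you draw from it --- that the two AO procedures produce the \emph{same iterate sequence} in $(\mathbf{f},\boldsymbol{\psi})$ --- does not follow, and this is the gap. The cancellation you identify between the log-prefactor $1/(1+\widehat{\Gamma}_k)$ and the WMSE weight $w_k$ holds only at the point where the auxiliary variables have just been refreshed, i.e., at the \emph{input} $\mathbf{f}^{(t)}$ of the subproblem. But each subproblem is then solved to optimality over $\mathbf{f}$ with the auxiliaries held fixed, and the two subproblems are structurally different: the QT subproblem is a sum of logarithms of concave quadratics in $\mathbf{f}$, while the WMSE subproblem is a weighted quadratic. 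At the subproblem optimum $\mathbf{f}^{(t+1)}$, the log argument $1+\widehat{\Gamma}_k(\mathbf{f}^{(t+1)},\nu_k)$ no longer equals $1+\Gamma_k(\mathbf{f}^{(t+1)})$ (because $\nu_k$ was frozen at $\mathbf{f}^{(t)}$), so the prefactor does not cancel $w_k$, and the two KKT systems differ. Hence the iterate sequences are in general distinct, and your final sentence overreaches.

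The paper takes a different, purely structural route: it observes that the MSE expression $e_k(u_k)$ is itself the negative of a quadratic-transform surrogate in the receiver variable $u_k$, and that ``inverting'' this QT recovers the fractional form $e_k^{\mathrm{mmse}}=1/(1+\Gamma_k)$. Substituting back into the WMSE objective and optimizing $w_k$ then reproduces the sum rate. The equivalence asserted is therefore at the level of reformulations (both recover the sum rate when auxiliaries are optimized, and WMSE can be read as a QT step on the MMSE followed by a weighting step), not at the level of per-iteration updates. Your bijection could be used to establish this weaker but correct statement --- namely, that the two reformulations share the same set of joint stationary points over all primary and auxiliary variables --- which is the standard sense of equivalence here; but the argument you actually give, via matching subproblem KKT systems iterate-by-iterate, does not go through.
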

\begin{proof}
 Please refer to Appendix \ref{Appendix_QT_WMSE_equiv} for the proof.    
\end{proof}

\subsection{Sum-rate maximization}\label{SR_Section}
In this section, we present the sum rate  maximization problem of the RIS-aided mmWave-NOMA communication system that ensures minimum rates to each user. In particular, we aim to obtain the optimal beamformer $\mathbf{f}$, RIS phase shift matrix $\mathbf{\Phi}$, and power allocation $\mathbf{p}$ that maximizes SR. However, solving the sum rate  maximization problem is mathematically intractable due to 1) fractional SINR terms, 2) non-convex constraints, and 3) coupled optimization variables. Thus, we tackle the problem by reformulating the individual transmit beamformer, phase shift matrix, and power allocation subproblems into equivalent formulations using the quadratic transform discussed in Section \ref{QT_defining_section}. Finally, to obtain a unified solution, we propose an algorithm based on alternating optimization. 
We start by presenting the equivalent quadratic transformation-based reformulations of individual subproblems in the following subsections.

\subsubsection{Transmit beamformer subproblem}\label{SR_Tx_SubProblem}
For given ${\bf p}$ and $\mathbf{\Phi}$, the optimal beamformer ${\bf f}$ subproblem becomes
\begin{subequations}\label{OptiProb:SR_Tx_SubProblem}
\begin{align}
\max_{\mathbf{f}} \quad &\sum_{k=1}^{K} \log_2\Big(1 + \Gamma_k(\mathbf{f}) \Big),\label{SR_Tx_Opti_Obj}\\
\textrm{s.t.} \quad &\left\lVert\mathbf{f}\right\lVert^2 = 1, |\text{{\bf f}}_i| \leq 2; \forall i = 1,\dots M, \\ 
&\Gamma_k(\mathbf{f}) \geq \eta_k;~\forall k = 1, \cdots K.
\end{align}
\end{subequations}
In \eqref{SR_Tx_Opti_Obj}, SINR as a function of $\mathbf{f}$ can be written using \eqref{SINR} as 
\begin{align}
        \Gamma_k(\mathbf{f}) &= \frac{|\mathbf{a}_k \mathbf{f}|^2}{\sigma^2 + \mathbf{f}^H \mathbf{A}_k \mathbf{f}},\label{SINR_f}
\end{align}
where
 \begin{align*}\
    \mathbf{a}_k &= \mathbf{g}_k\sqrt{\mathbf{p}_k}, \\
    \mathbf{A}_k &= \mathbf{g}_k^H \mathbf{g}_k \sum^K_{i = k+1} \mathbf{p}_i  + \mathbf{Z}_k \sum^K_{i=1} \mathbf{p}_i, \\ 
    \mathbf{Z}_k &= \Big(\sigma^2_{\epsilon} \|\widehat{\mathbf{h}}_{k}\|^2 \mathbf{I} + \sigma^2_{\epsilon} \widehat{\mathbf{H}}^H \widehat{\mathbf{H}} + \sigma^4_{\epsilon} N \mathbf{I}\Big), \\
    \mathbf{g}_k &= \widehat{\mathbf{h}}_{k}^H \mathbf{\Phi} \widehat{\mathbf{H}},
\end{align*}
$\forall k = 1,\cdots K$. To make the problem tractable, we apply the quadratic transform discussed in Section \ref{QT_defining_section}. The equivalent representation of the $\Gamma_k(\mathbf{f})$ term after introducing auxiliary variable $y_k$ is
\begin{align}
          \widehat{\Gamma}_k(\mathbf{f},y_k)       
    &= 2\Re\{ y_k^{\star} \mathbf{a}_k \mathbf{f}\} - |y_k|^2 \Big( \sigma^2 + \mathbf{f}^H \mathbf{A}_k \mathbf{f} \Big) \label{SINR_QT_Tx}.
\end{align}
It is now very clear that the reformulated objective is concave in $\mathbf{f}$ and quadratic in $y_k$. Nonetheless, the above formulation is still non-convex due to the minimum rate constraint. This constraint can be rewritten using \eqref{SINR_f} as 
\begin{subequations}
    \begin{align*}
   \Gamma_k(\mathbf{f}) \geq \eta_k &\Rightarrow  |\mathbf{a}_k \mathbf{f}|^2 \geq  \eta_k \Big( \sigma^2 + \mathbf{f}^H \mathbf{A}_k \mathbf{f} \Big), \\
   &\Rightarrow \mathbf{f}^H \mathbf{B}_k \mathbf{f} \geq \eta_k \sigma^2, \label{MinRate_Con:SR_Tx_SubProblem}
\end{align*}
\end{subequations}
where $\mathbf{B}_k = \mathbf{a}^H_k \mathbf{a}_k - \eta_k \mathbf{A}_k$.
Further, we obtain a concave lower bound by using the Taylor series expansion at a feasible point $\mathbf{f}_{\rm{o}}$ as   
\begin{align}
    2 \Re\{\mathbf{f}_{\rm{o}}^H \mathbf{B}_k \mathbf{f}\} - \|\Tilde{\mathbf{B}}_k \mathbf{f}_{\rm{o}}\|^2 \geq \eta_k \sigma^2,
\end{align}\label{MinRate_Con_Taylor:SR_Tx_SubProblem}
where $\Tilde{\mathbf{B}}_k^H \Tilde{\mathbf{B}}_k = {\mathbf{B}_k}.$ This is also equivalent to linearizing the non-convex minimum rate constraint at any feasible point $\mathbf{f}_{\rm{o}}$. This makes the subproblem convex in $\mathbf{f}$ and can be solved for the given linearized constraint around $\mathbf{f}_{\rm{o}}$ which can be updated successively. Finally, the optimization problem to obtain the transmit beamformer is 
\begin{subequations}\label{Reformulated:OptiProb:SR_Tx_SubProblem}
\begin{align}
\max_{\mathbf{f},~y_k} \quad &\sum_{k=1}^{K} \log_2\Big(1 + \widehat{\Gamma}(\mathbf{f},y_k) \Big)\label{ReformObj:SR_Tx_SubProblem},\\
\textrm{s.t.} \quad &\left\lVert\mathbf{f}\right\lVert^2 = 1, |\text{{\bf f}}_i| \leq 2; \forall i = 1,\dots M,\\ 
&2 \Re\{\mathbf{f_{\rm{o}}}^H \mathbf{B}_k \mathbf{f}\} - \|\Tilde{\mathbf{B}}_k \mathbf{f_{\rm{o}}}\|^2 \geq \eta_k \sigma^2;\forall k = 1, \cdots K, \\
&y_k \in \mathbb{C}; \forall k = 1,\cdots K.
\end{align}
\end{subequations}
It is worth noting that the reformulated transmit beamformer subproblem is now a function of the introduced auxiliary variable $y_k$ whose optimal value for a given $\mathbf{f}$ can be obtained as  
\begin{align} \label{Reformulated:yopt:SR_Tx_SubProblem}
    y_k^{\rm{opt}} = \frac{\mathbf{a}_k \mathbf{f}}{\sigma^2 + \mathbf{f}^H \mathbf{A}_k \mathbf{f}}.
\end{align} 
Note that substituting \eqref{Reformulated:yopt:SR_Tx_SubProblem} in \eqref{SINR_QT_Tx} results in \eqref{SINR}. Through this, we can ensure that the maximum objective value of the quadratic transform-based reformulation and the original problem is the same. The reformulated problem in \eqref{Reformulated:OptiProb:SR_Tx_SubProblem} is now convex, which can be solved using standard solvers such as CVX for $\mathbf{f}$. Furthermore, in order to obtain the transmit beamformer, we iteratively solve \eqref{Reformulated:OptiProb:SR_Tx_SubProblem} for $\mathbf{f}$ and \eqref{Reformulated:yopt:SR_Tx_SubProblem} for $y_k^{\rm{opt}}$ untill \eqref{ReformObj:SR_Tx_SubProblem} converges.   
%%%%%%%%%%%%%%%%%%%%%%%%%%%%%%%%%%%%%%%%%%%%%%%%%%%%%%%%%%%%%%%%%%%%
%%%%%%%%%%%%%%%%%%%%%%%%%%%%%%%%%%%%%%%%%%%%%%%%%%%%%%%%%%%%%%%%%%%%
\subsubsection{RIS phase shift matrix subproblem}\label{SR_Psi_SubProblem}
For given ${\bf p}$ and ${\bf f}$, the phase shift matrix $\mathbf{\Phi} = \rm{diag}(\boldsymbol{\psi})$ subproblem is 
\begin{subequations}\label{OriginalOpti:SR_Psi_SubProblem}
\begin{align}
\max_{\boldsymbol{\psi}} \quad &\sum_{k=1}^{K} \log_2\Big(1 + \Gamma_k(\boldsymbol{\psi}) \Big)\label{SR_Psi_Opti_Obj},\\
\textrm{s.t.} \quad &|\boldsymbol{\psi}_i| = 1; \forall i = 1, \dots N,\\ 
&\Gamma_k \geq \eta_k;~\forall k = 1, \cdots K.
\end{align}
\end{subequations}
In \eqref{SR_Psi_Opti_Obj}, SINR as a function of $\boldsymbol{\psi}$ can be written using \eqref{SINR} as 
\begin{align}
    \Gamma_k(\boldsymbol{\psi}) &= \frac{|\boldsymbol{\psi}^H \mathbf{E}_k^* \mathbf{f}^* \sqrt{\mathbf{p}_k}|^2}{\sigma^2 + |\boldsymbol{\psi}^H \mathbf{E}_k^* \mathbf{f}^*|^2 \sum_{i=k+1}^K \mathbf{p}_i + \alpha_k \sum_{i=1}^K \mathbf{p}_i} \label{SINR_Psi}  
\end{align}
where $\mathbf{E}_k = \rm{diag}({\widehat{\mathbf{h}}_{k}}^*)\widehat{\mathbf{H}}$.
Similar to the transmit beamformer subproblem, we apply the quadratic transform to make the objective function tractable in terms of $\boldsymbol{\psi}$. The equivalent representation of $\Gamma_k(\boldsymbol{\psi})$ after applying the quadratic transform is
    \begin{align}
    \Tilde{\Gamma}_k(\boldsymbol{\psi},\nu_k) 
    =&~~2\Re\{\nu_k^{*} \boldsymbol{\psi}^H \mathbf{H}_k^* \mathbf{f}^* \sqrt{\mathbf{p}_k}\} - |\nu_k|^2 (\sigma^2 + \alpha_k\sum_{i=1}^K \mathbf{p}_i) \nonumber \\ &~~ - |\nu_k|^2 {\boldsymbol{\psi}}^H \Big( \mathbf{H}_k^* \mathbf{f}^* \mathbf{f}^T \mathbf{H}_k^T \sum_{i=k+1}^K \mathbf{p}_i \Big) {\boldsymbol{\psi}}, \label{SR_QT_Psi_Reform}
\end{align}
where $\nu_k$ is an auxiliary variable. 
This reformulated objective can further be simplified as  
    \begin{align}
    \Tilde{\Gamma}_k(\boldsymbol{\psi},\nu_k) 
    = &~~\Tilde{\boldsymbol{\psi}}^H \mathbf{C}_k \Tilde{\boldsymbol{\psi}} - c_k,\label{SR_Psi_SubP_SINR}
\end{align}
where 
\begin{align*}
    \mathbf{C}_k &= \begin{bmatrix}
    -|\nu_k|^2 \mathbf{H}_k^* \mathbf{f}^* \mathbf{f}^T \mathbf{H}_k^T \sum_{i=k+1}^K \mathbf{p}_i & \nu_k^{*} \mathbf{H}_k^* \mathbf{f}^* \sqrt{\mathbf{p}_k} \\
    \sqrt{\mathbf{p}_k} \mathbf{f}^T \mathbf{H}_k^T \nu_k & 0 
\end{bmatrix}, \\
c_k &= |\nu_k|^2 (\sigma^2 + \alpha_k\sum_{i=1}^K \mathbf{p}_i), \\
\Tilde{\boldsymbol{\psi}} &= [\boldsymbol{\psi}~~~1]^T.
\end{align*} 
Further, we rewrite the minimum rate constraint using \eqref{SINR_Psi} as follows
\begin{align*}
\Gamma_k(\boldsymbol{\psi}) \geq \eta_k &\Rightarrow |\boldsymbol{\psi}^H \mathbf{H}_k^* \mathbf{f}^*|^2 \Big( \mathbf{p}_k - \eta_k \sum^K_{i=k+1} \mathbf{p}_i \Big) \geq \\&\hspace{1cm}\eta_k (\sigma^2 + \alpha_k \sum_{i=1}^K \mathbf{p}_i), \\
%&\boldsymbol{\psi}^H \mathbf{D}_k \boldsymbol{\psi} \geq d_k, \nonumber \\
 &\Rightarrow \Tilde{\boldsymbol{\psi}}^H \mathbf{D}_k \Tilde{\boldsymbol{\psi}} \geq d_k
\end{align*}
where $\mathbf{D}_k = \begin{bmatrix}
    \mathbf{H}_k^* \mathbf{f}^* \mathbf{f}^T \mathbf{H}_k^T & \mathbf{0} \\
    \mathbf{0}^T & 0 
\end{bmatrix}$, $d_k = \frac{\eta_k (\sigma^2 + \alpha_k \sum_{i=1}^K \mathbf{p}_i)}{\mathbf{p}_k - \eta_k \sum^K_{i=k+1} \mathbf{p}_i}$, and $\mathbf{0}$ is a zero vector of length $N$. 

The above representation of the rate constraint makes the optimization problem non-convex. Furthermore, the unit modulus constraint also makes the RIS phase shift subproblem non-convex. Thus, to solve such an optimization problem, we apply the SDR technique which we discuss next. We begin by defining $$\mathbf{\Psi} =\Tilde{\boldsymbol{\psi}} \Tilde{\boldsymbol{\psi}}^H.$$
The reformulated SINR given in \eqref{SR_Psi_SubP_SINR} in terms of the newly defined variable $\mathbf{\Psi}$ can be written as 
$$\Tilde{\Gamma}_k(\boldsymbol{\psi},\nu_k) = \rm{trace}(\mathbf{\Psi} \mathbf{C}_k) - c_k.$$
Further the minimum rate constraint can also be written as $$\textrm{trace}(\mathbf{\Psi} \mathbf{D}_k) \geq d_k.$$
Thus, the optimization problem given in \eqref{OriginalOpti:SR_Psi_SubProblem} can now be reformulated as a semidefinite programming problem as 
\begin{subequations}
\begin{align}
\max_{\mathbf{\Psi},~\nu_k} \quad &\sum_{k=1}^{K} \log_2\Big(1 - c_k + \rm{trace}(\mathbf{\Psi} \mathbf{C}_k) \Big),\\
\textrm{s.t.} \quad & \textrm{diag}(\mathbf{\Psi}) = 1,~~\textrm{rank}\left( \mathbf{\Psi} \right) = 1,\\ 
& \textrm{trace}(\mathbf{\Psi} \mathbf{D}_k) \geq d_k;~\forall k = 1, \cdots K,\\
&\nu_k \in \mathbb{C}; \forall k = 1,\cdots K,
\end{align}
\end{subequations}
where the constraint $\textrm{rank}\left( \mathbf{\Psi} \right) = 1$ ensures $\mathbf{\Psi} = \Tilde{\boldsymbol{\psi}}\Tilde{\boldsymbol{\psi}}^H$. By applying SDR, we drop the non-convex rank 1 constraint to solve the RIS phase shift matrix subproblem as 
\begin{subequations}\label{ReformOpti:SR_Psi_SubProblem}
\begin{align}
\max_{\mathbf{\Psi},~\nu_k} \quad &\sum_{k=1}^{K} \log_2\Big(1 - c_k + \rm{trace}(\mathbf{\Psi} \mathbf{C}_k) \Big) \label{ReformObj:SR_Psi_SubProblem} ,\\
\textrm{s.t.} \quad &\textrm{diag}(\mathbf{\Psi}) = 1,\\ 
&\textrm{trace}(\mathbf{\Psi} \Tilde{\mathbf{D}}_k) \geq d_k;~\forall k = 1, \cdots K,\\
&\nu_k \in \mathbb{C}; \forall k = 1,\cdots K.
\end{align}
\end{subequations}
This reformulated problem is a function of the auxiliary variable $\nu_k$ whose optimal solution can be obtained as 
\begin{align}\label{Reformulated:nuopt:SR_Psi_SubProblem}
    \nu_k^{\rm{opt}} = \frac{\boldsymbol{\psi}^H \mathbf{H}_k \mathbf{f} \sqrt{\mathbf{p}_k}}{\sigma^2 + |\boldsymbol{\psi}^H \mathbf{H}_k \mathbf{f}|^2 \sum_{i=k+1}^K \mathbf{p}_i + \alpha_k \sum_{i=1}^K \mathbf{p}_i};~\forall k.
\end{align}
It is worth noting that upon substituting \eqref{Reformulated:nuopt:SR_Psi_SubProblem} in \eqref{SR_QT_Psi_Reform}, we obtain \eqref{SINR} which ensures that the quadratic transform-based reformulation and the original problem has the same maximum objective value. The reformulated problem in \eqref{ReformOpti:SR_Psi_SubProblem} is now convex, which can be solved using standard solvers such as CVX for $\boldsymbol{\psi}$. Furthermore, in order to obtain the RIS phase shift vector, we iteratively solve \eqref{ReformOpti:SR_Psi_SubProblem} for $\boldsymbol{\psi}$ and \eqref{Reformulated:nuopt:SR_Psi_SubProblem} for $\nu_k^{\rm{opt}}$ untill \eqref{ReformObj:SR_Psi_SubProblem} converges.
%%%%%%%%%%%%%%%%%%%%%%%%%%%%%%%%%%%%%%%%%%%%%%%%%%%%%%%%%%%%%%%%%%%%
%%%%%%%%%%%%%%%%%%%%%%%%%%%%%%%%%%%%%%%%%%%%%%%%%%%%%%%%%%%%%%%%%%%%
\subsubsection{Power allocation subproblem}\label{SR_PA_SubProblem}
 For given $\boldsymbol{\psi}$ and $\mathbf{f}$, the power allocation subproblem is defined as 
\begin{subequations}\label{OriginalOpti:SR_PA_SubProblem}
    \begin{align}
        \max_{\mathbf{p}} \quad &\sum_{k=1}^{K} \log_2\Big(1 + \Gamma_k(\mathbf{p}) \Big)\label{SR_P_Opti_Obj},\\
\textrm{s.t.} \quad &\sum_{k=1}^K \mathbf{p}_k \leq P_s, \mathbf{p} \geq \mathbf{0}, \\ &\Gamma_k \geq \eta_k;~\forall k = 1, \cdots K.
    \end{align}
\end{subequations}
In \eqref{SR_P_Opti_Obj}, SINR as a function of $\mathbf{p}$ can be written using \eqref{SINR} as \begin{align}
    \Gamma_k(\mathbf{p})  = \frac{\mathbf{p}_k}{\mathbf{a}_k + \sum_{i=k+1}^K \mathbf{p}_i+\mathbf{b}_k \sum_{i=1}^K \mathbf{p}_i}, \label{SR_SINR_p}
\end{align}
where $\mathbf{a}_k = \frac{\sigma^2}{|\mathbf{g}_k\mathbf{f}|^2}$, $\mathbf{b}_k = \frac{\alpha_k}{|\mathbf{g}_k\mathbf{f}|^2}$. 
After applying the quadratic transformation, the equivalent representation of SINR is 
\begin{align}
    \Bar{\Gamma}_k(\boldsymbol{\mathbf{p}},x_k) 
    =&~~2 x_k \sqrt{\mathbf{p}_k} - x_k^2 \Big( \mathbf{a}_k + \sum_{i=k+1}^K \mathbf{p}_i + \mathbf{b}_k \sum_{i=1}^K \mathbf{p}_i \Big). \label{SINR_p_SR_reformQT}
\end{align}
where $x_k$ is a auxiliary variable.
Further, we rewrite the minimum rate constraint as follows 
\begin{align}
        \Gamma_k(\boldsymbol{\mathbf{p}}) \geq \eta_k \Rightarrow
        \eta_k\sum_{i=k+1}^K \mathbf{p}_i - \mathbf{p}_k + \eta_k (\mathbf{a}_k + \mathbf{b}_k \sum_{i=1}^K \mathbf{p}_i ) \leq 0 \label{SR_P_Rmin}.
\end{align}
Finally, using this and \eqref{SINR_p_SR_reformQT}, the power allocation subproblem can be reformulated as 
\begin{subequations}\label{ReformOpti:SR_PA_SubProblem}
    \begin{align}
        \max_{\mathbf{p},~~x_k} \quad &\sum_{k=1}^{K} \log_2\Big(1 + \Bar{\Gamma}_k(\boldsymbol{\mathbf{p}},x_k) \Big)\label{ReformObj:SR_PA_SubProblem},\\
\textrm{s.t.} \quad &\sum_{k=1}^K \mathbf{p}_k \leq P_s, \mathbf{p} \geq \mathbf{0}, \\ &\eta_k\sum_{i=k+1}^K \mathbf{p}_i - \mathbf{p}_k + \eta_k (\mathbf{a}_k + \mathbf{b}_k \sum_{i=1}^K \mathbf{p}_i )\leq 0;~\forall k, \label{EE_PA_MinRate}\\
&x_k \in \mathbb{R}; \forall k = 1,\cdots K.
    \end{align}
\end{subequations}
The reformulated optimization problem is a function of the auxiliary variable $x_k$ whose optimal value can be obtained as 
\begin{align}\label{Reformulated:xopt:SR_PA_SubProblem}
    x_k^{\rm{opt}} = \frac{\sqrt{\mathbf{p}_k}}{\mathbf{a}_k + \sum_{i=k+1}^K \mathbf{p}_i + \mathbf{b}_k}.
\end{align}
Similar to the other subproblems, it is worth noting that substituting \eqref{Reformulated:xopt:SR_PA_SubProblem} in \eqref{SINR_p_SR_reformQT} results in \eqref{SINR} that ensures the maximum objective value of the quadratic transform-based reformulation and the original problem is the same. This problem is concave in $\mathbf{p}$ and quadratic in $x_k;\forall k = 1,\cdots K$, and thus, can be solved by using a standard solver such as CVX for $\mathbf{p}$ and using \eqref{Reformulated:xopt:SR_PA_SubProblem} for $x_k$. Finally, the optimal power allocation for a given $\mathbf{f}$ and $\boldsymbol{\psi}$ can be obtained by iterating over the solutions of $\mathbf{p}$ and $x_k$, untill \eqref{ReformObj:SR_PA_SubProblem} converges.

\subsubsection{Quadratic transform-based algorithm for  sum rate   maximization}
So far, we have presented techniques to reduce individual subproblems into convex forms to obtain the transmit beamformer $\mathbf{f}$, RIS phase shift vector $\boldsymbol{\psi}$, and power allocation $\mathbf{p}$ solutions. However, in order to obtain a unified solution that maximizes the  sum rate   maximization problem given in \eqref{SR}, we propose Algorithm \ref{Alg_QT_SR} that iteratively solves the reformulated optimization problems given in \eqref{Reformulated:OptiProb:SR_Tx_SubProblem}, \eqref{ReformOpti:SR_Psi_SubProblem}, and \eqref{ReformOpti:SR_PA_SubProblem} for $\mathbf{f}$, $\boldsymbol{\psi}$, and $\mathbf{p}$, respectively. Moreover, it is important to note that all the subproblems are based on the quadratic transform, which needs to be solved iteratively over the optimization variable $\mathbf{f}$/$\boldsymbol{\psi}$/$\mathbf{p}$ and its corresponding auxiliary variable $y_k$/$\nu_k$/$x_k$. However, it can be
 verified from \eqref{Reformulated:nuopt:SR_Psi_SubProblem} and \eqref{Reformulated:xopt:SR_PA_SubProblem} that the
 the auxiliary variables $y_k$ and $\nu_k$ are same for the beamformer $\mathbf{f}$ and RIS phase shift $\boldsymbol{\psi}$ subproblems. Thus, we use the same auxiliary variable to solve them. On the other hand, the auxiliary variable associated with power allocation subproblem is in real space and hence we update it separately. We summarize this framework in Algorithm \ref{Alg_QT_SR}.  
\begin{algorithm}[h!]\label{Alg_QT_SR}
% \SetKwComment{Comment}{$\triangleright$\ }{}
\KwInput{$P_s$, $\widehat{\mathbf{H}},~\widehat{\mathbf{H}}_2$.}
\KwInit{$\mathbf{f}$, $\boldsymbol{\psi}$, $\mathbf{p},i=0$}
\SetKwRepeat{Repeat}{Repeat}{Untill: \eqref{SR} converges}
\Repeat{}{
$i \to i + 1$ \\
Update $\nu_k;~\forall k = 1, \cdots, K$ using \eqref{Reformulated:nuopt:SR_Psi_SubProblem}\\
Update $\boldsymbol{\psi}$ by solving 
\eqref{ReformOpti:SR_Psi_SubProblem} \\
Set $\mathbf{f}_{\rm{o}} = \mathbf{f}$ and update $\mathbf{f}$ by solving  
\eqref{Reformulated:OptiProb:SR_Tx_SubProblem} \\
\SetKwRepeat{Repeat}{Repeat}{Untill:}
\Repeat{Untill \eqref{ReformObj:SR_PA_SubProblem} converges}
{
Update $x_k;~\forall k$ using \eqref{Reformulated:xopt:SR_PA_SubProblem}\\
Update $\mathbf{p}$ by solving \eqref{ReformOpti:SR_PA_SubProblem}
}
}
\caption{Quadratic transform-based algorithm for sum rate maximization}
\end{algorithm}

\subsection{Energy Efficiency Maximization}\label{EE_Section} 
In this section, we aim to maximize the energy efficiency of the proposed system model while ensuring the minimum transmission rate to each user. In particular, we aim to obtain optimal beamformer $\mathbf{f}$, RIS phase shift matrix $\boldsymbol{\psi}$, and power allocation $\mathbf{p}$ that maximize the energy efficiency formulation presented in \eqref{P_EE}.
It is important to recall here that energy efficiency and sum rate  are equivalent problems with respect to the RIS phase shift $\boldsymbol{\psi}$ and transmit beamformer subproblems $\mathbf{f}$, and only differ in terms of the power allocation $\mathbf{p}$ subproblem. Thus, we solve the power allocation subproblem $\mathbf{p}$ using quadratic transform-based framework, and utilize the solutions for $\mathbf{f}$ and $\boldsymbol{\psi}$ given in sections \ref{SR_Tx_SubProblem} and \ref{SR_Psi_SubProblem}, respectively in the following subsection.  
\subsubsection{Power allocation subproblem} \label{EE_PA_Subproblem_Section}
Energy efficiency is the ratio of sum rate and total power consumption. This makes the numerator of the objective of energy efficiency problem multi-ratio as seen in \eqref{OriginalOpti:SR_PA_SubProblem}. Moreover, the power allocation sub-problem maximizing the overall energy efficiency becomes a single ratio problem with a multi-ratio numerator and a linear denominator in $\mathbf{p}$. In order to make such a problem tractable, we first apply quadratic transform to the sum rate, i.e., to the numerator, as done in Section \ref{SR_PA_SubProblem}, and then apply the quadratic transform again to the overall energy efficiency ratio. We begin by writing the power allocation subproblem maximizing the energy efficiency \eqref{P_EE} for given $\mathbf{f}$ and $\boldsymbol{\psi}$ as 
\begin{subequations}
\begin{align}
\max_{\mathbf{p}} \quad &\frac{\sum_{k=1}^K \log_2 ( 1 + \Gamma_k(\mathbf{p}) )}{\sum^K_{k=1}{\bf{p}}_k + P_c},\label{EE_PA_Obj} \\
\textrm{s.t.} \quad &\sum_{k=1}^{K}{\bf{p}}_k \leq P_s~,~{\bf{p}}_k \geq 0;~\forall k = 1, \cdots K\label{EE_PA_Con1}, \\
&\Gamma_k \geq \eta_k;~\forall k = 1, \cdots K\label{EE_PA_Con2}.
\end{align}\label{EE_PA}
\end{subequations}
In \eqref{EE_PA_Obj}, $\Gamma_k(\mathbf{p})$ as a function of $\mathbf{p}$ can be written as given in \eqref{SR_SINR_p}. 
However, as mentioned in Section \ref{SR_PA_SubProblem}, $\Gamma_k(\mathbf{p})$ is a fraction in $\mathbf{p}$. Hence, by applying the quadratic transform to the numerator of \eqref{EE_PA_Obj}, we can rewrite the power allocation subproblem as%
    \begin{align*}
  \max_{\mathbf{p},x_k} \quad  &\frac{\sum_{k=1}^K \log_2 ( 1 + \Bar{\Gamma}_k(\mathbf{p}) )}{\sum^K_{k=1}{\bf{p}}_k + P_c} \\
  \textrm{s.t.} \quad & \eqref{EE_PA_Con1}-\eqref{EE_PA_Con2},\\
  &x_k \in \mathbb{R}; \forall k = 1, \cdots, K,
\end{align*}%\label{EE_PA1}
where $\Bar{\Gamma}_k(\mathbf{p})$ is given in \eqref{SINR_p_SR_reformQT} and $x_k$ is an auxiliary variable. Nonetheless, the above reformulated subproblem still falls under concave-convex FP since the numerator and denominator of its objective function are concave and linear in $\mathbf{p}$.  Thus, we apply quadratic transform again and rewrite the above problem as%  
\begin{subequations}
 \begin{align}
\max_{\mathbf{p},x_k,z} \quad &  2z\sqrt{\sum_{k=1}^K \log_2\Big(  1 + \Bar{\Gamma}_k(\boldsymbol{\mathbf{p}},x_k) \Big)} - z^2 \Big(\sum_{k=1}^K{\bf p}_k + P_c\Big) \label{EE_PA_Obj_Final} \\
\textrm{s.t.} \quad &\sum_{k=1}^{K}{\bf{p}}_k \leq P_s~,~{\bf{p}}_k \geq 0;~\forall k = 1, \cdots K \label{EE_PA_Con1_Final}, \\
\eta_k\sum_{i=k+1}^K& \mathbf{p}_i - \mathbf{p}_k + \eta_k (\mathbf{a}_k + \mathbf{b}_k \sum_{i=1}^K \mathbf{p}_i )\leq 0;~\forall k,\label{EE_PA_Con2_Final}\\
& x_k \in \mathbb{R}; \forall k = 1,\cdots K, z \in \mathbb{R} \label{EE_PA_Con3_Final},
\end{align}\label{EE_PA_Opti_Final} %
\end{subequations}where $z$ is an auxiliary variable and \eqref{EE_PA_Con2_Final} represents the minimum rate constraint that follows from \eqref{EE_PA_MinRate}. 
The above reformulated optimization problem is convex in $\mathbf{p}$ with  correponding optimal auxiliary variables 
\begin{align}
        x_k^{\rm{opt}} = \frac{\sqrt{\mathbf{p}_k}}{\mathbf{a}_k + \sum_{i=k+1}^K \mathbf{p}_i + \mathbf{b}_k},\label{Reformulated:xopt:EE_PA_SubProblem}
\end{align}
and  
\begin{align}
    z^{\rm{opt}} = \frac{\sqrt{\begin{aligned}   
    &\sum_{k = 1}^K \log_2 \Big( 1 + \Bar{\Gamma}_k(\boldsymbol{\mathbf{p}},x_k) \Big)
    \end{aligned}}}{\sum_{k=1}^K {\bf p}_k + P_c} \label{Reformulated:zopt:EE_PA_SubProblem},
\end{align}
respectively. 
These optimal values of auxiliary variables ensure that the objective is equivalent to the original FP problem. Hence, to obtain the power allocation solution $\mathbf{p}$, we iteratively solve \eqref{EE_PA_Opti_Final} for $\mathbf{p}$, \eqref{Reformulated:xopt:EE_PA_SubProblem} for $x_k^{\rm{opt}}$, and \eqref{Reformulated:zopt:EE_PA_SubProblem} for $z^{\rm{opt}}$, until \eqref{EE_PA_Obj_Final} converges. 

\subsubsection{Quadratic transform-based algorithm for energy efficiency maximization} 
In this section, we present an iterative framework for energy efficiency maximization problem given in \eqref{P_EE} that iterates over  transmit beamforming $\mathbf{f}$, RIS phase shift $\boldsymbol{\psi}$ and power allocation $\mathbf{p}$ subproblems given in 
\eqref{Reformulated:OptiProb:SR_Tx_SubProblem}, \eqref{ReformOpti:SR_Psi_SubProblem}, and \eqref{EE_PA_Opti_Final}, respectively. These subproblems require solving for optimization variables $\mathbf{f}$/$\boldsymbol{\psi}$/$\mathbf{p}$ and their corresponding auxiliary variables $y_k$/$\nu_k$/$\{x_k,z\}$.  Similar to the  sum rate   maximization framework, since auxiliary variables of the transmit beamformer $\mathbf{f}$ and the RIS phase shift vector $\boldsymbol{\psi}$ provide the same maximum objective value, we use one common auxiliary variable to update both the subproblems. On the other hand, the power allocation subproblem has two auxiliary variables that need to be updated as shown in \eqref{Reformulated:xopt:EE_PA_SubProblem} and \eqref{Reformulated:zopt:EE_PA_SubProblem}. Algorithm \ref{Alg_QT_EE} summarizes the step for energy maximization problem.
\begin{algorithm}[h!]\label{Alg_QT_EE}
% \SetKwComment{Comment}{$\triangleright$\ }{}
\KwInput{$P_s$, $\widehat{\mathbf{H}},~\widehat{\mathbf{H}}_2$.}
\KwInit{$\mathbf{f}$ , $\boldsymbol{\psi}$, $\mathbf{p},i=0$}
\SetKwRepeat{Repeat}{Repeat}{Untill: \eqref{P_EE} converges}
\Repeat{}{
$i \to i + 1$ \\
Update $\nu_k;~\forall k = 1, \cdots K$ using \eqref{Reformulated:nuopt:SR_Psi_SubProblem}\\
Update $\boldsymbol{\psi}$ by solving 
\eqref{ReformOpti:SR_Psi_SubProblem} \\
Set $\mathbf{f}_{\rm{o}} = \mathbf{f}$ and update $\mathbf{f}$ by solving \eqref{Reformulated:OptiProb:SR_Tx_SubProblem}\\
\SetKwRepeat{Repeat}{Repeat}{} 
\Repeat{Untill \eqref{EE_PA_Obj_Final} converges}
{
Update $x_k;~\forall k$ using \eqref{Reformulated:xopt:EE_PA_SubProblem}\\
Update $z$ using \eqref{Reformulated:zopt:EE_PA_SubProblem}\\
Update $\mathbf{p}$ by solving \eqref{EE_PA_Opti_Final}
}
}
\caption{Quadratic transform-based algorithm for energy efficiency maximization}
\end{algorithm}
\section{Computational Complexity of Proposed Algorithms }
We solve the transmit beamforming and power allocation subproblems using standard solvers that model the quadratic problem as a second-order cone program and employ the interior point algorithm. In each iteration of this algorithm, a linear system of equations is solved with a complexity of $\mathcal{O}\left(n^3\right)$. The total number of iterations required to achieve the optimal solution is $\mathcal{O}\left(\sqrt{n^{3.5}} \ln{\epsilon^{-1}}\right)$, where $\epsilon$ represents the duality gap, and $n$ is the cone dimension \cite{30}, \cite{48}.  
Subsequently, the SDR technique is employed to tackle the RIS phase shift subproblem, resulting in a complexity of $\mathcal{O}(N^{9/2}\log(1/\varrho))$ \cite{SDR}, where $\varrho$ denotes the solution accuracy. Consequently, upon comparing the complexities associated with these subproblems, we conclude that both the algorithms exhibit a complexity of $\mathcal{O}(N^{9/2}\log(1/\varrho))$, which are imposed by the use of SDR. Finally, we demonstrate the convergence of the proposed algorithms through numerical simulations.

\section{Numerical Results and Discussion}
In this section, we first compare the achievable sum rate  and energy efficiency performances of RIS-NOMA-aided fully analog architecture and fully digital architecture. Next, we present the performance of the proposed RIS-aided mmWave-NOMA system for various system parameter configurations. For the numerical analysis, we set the system parameters as the power available at the BS  $P_{s} = 40~\rm{dB}$, the number of BS antennas $M = 16$, the number of RIS elements $N = 64$, the number of users $K = 4$, minimum rate requirements $R_{k,\rm{th}} = 0.3~\rm{bps/Hz};\forall k$, azimuth and elevation angles of departure and arrival $\theta_j^t, \theta_j^r, \phi_j^r, \theta_{jk}^t, \phi_{jk}^t \sim \mathcal{U}[-\pi/2, \pi/2];\forall j,k$, and the number of paths $S_1 = S_2 = 3$, unless mentioned otherwise. The mean sum rate  and energy efficiency performances presented here have been averaged over 100 channel realizations through Monte Carlo simulation.
% \newline

\begin{figure}[h!]
    \centering
    \includegraphics[width=.6\columnwidth]{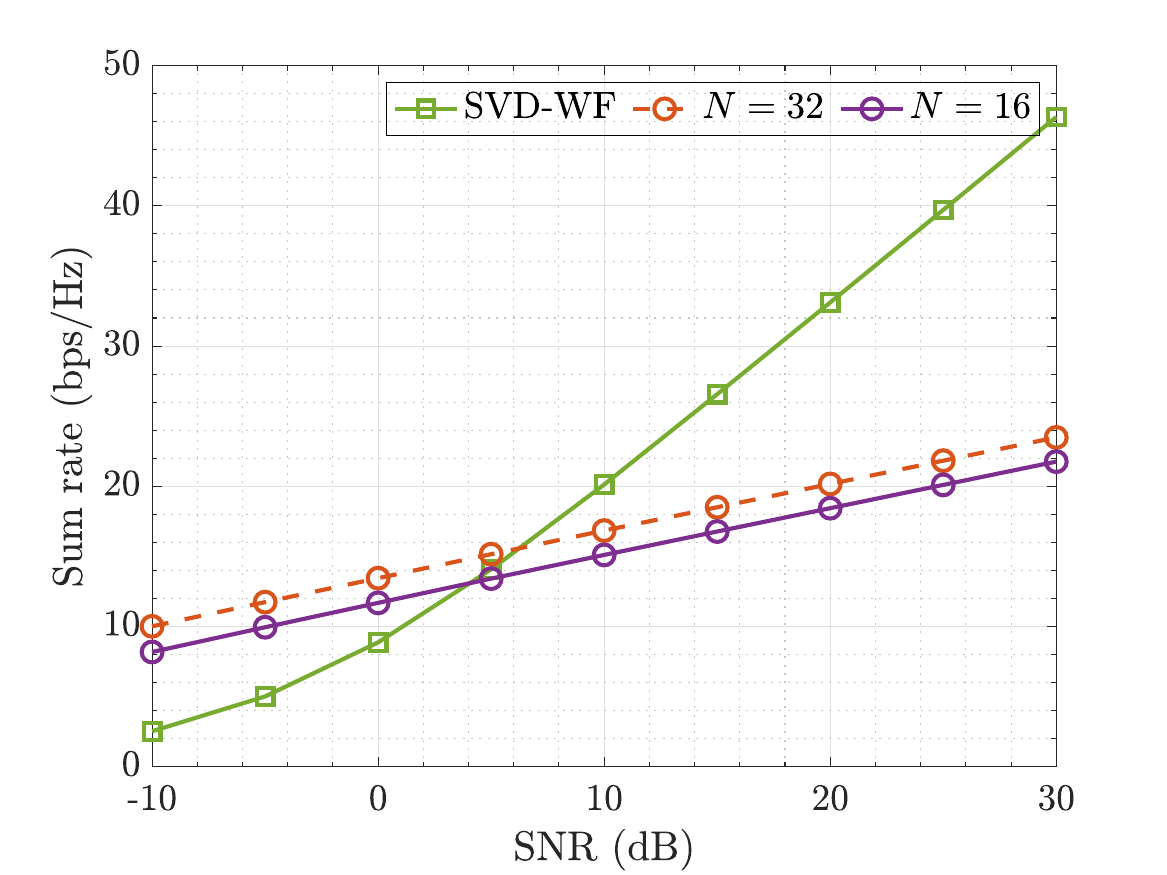}%IEEE_Access/Figures_December/SR_SNR_SVD.pdf
    \caption{Sum rate performance comparison of the proposed RIS-NOMA-aided fully analog architecture with a fully digital (SVD-WF) architecture under PCSI}
    \label{SR_SNR_SVD_Comp}
\end{figure}

Figure \ref{SR_SNR_SVD_Comp} shows the sum rate  performance of the proposed system with the fully analog architecture in comparison to the fully digital MIMO system with singular value decomposition (SVD)-based precoding scheme along with optimal power allocation using the water filling algorithm referred to as SVD-WF for various values of $N$ when PCSI is available. It is worth noting that SVD-WF is a capacity-achieving method \cite{Vishw_Tse} and hence can be used as a benchmark scheme for the fully digital scheme for comparative analysis of the proposed scheme for FA. It can be observed that the proposed system model outperforms the SVD-based fully digital system  in the low SNR region. This is because of the improvement in the receive SNR at the user due to RIS array gain, which compensates for the low SNR.  
However, the performance of SVD-WF increases at a higher rate compared to the proposed scheme because the SVD-based precoder enables parallel non-interferencing channels. On the contrary, as NOMA performance is limited by inter-user interference, the performance of the proposed framework is smaller in comparison to the SVD-WF scheme at high SNR. Nonetheless, it can also be observed that the performance of the proposed system at high SNR increases towards the SVD-WF scheme with $N$. Therefore, by introducing fully analog with RIS-NOMA appropriately, we can achieve sum rate  performance comparable to the fully digital architecture while keeping the system complexity low, especially at low SNR.
%%%%%%%%%%%%%%%%%%%%%%%%%%%%%%%%%%%%%%%%%%%%%%%%%%%%%%%%%%%%%%%%%%%%%%%%%%%%
\begin{figure}[h!]
    \centering
    \includegraphics[width=.6\columnwidth]{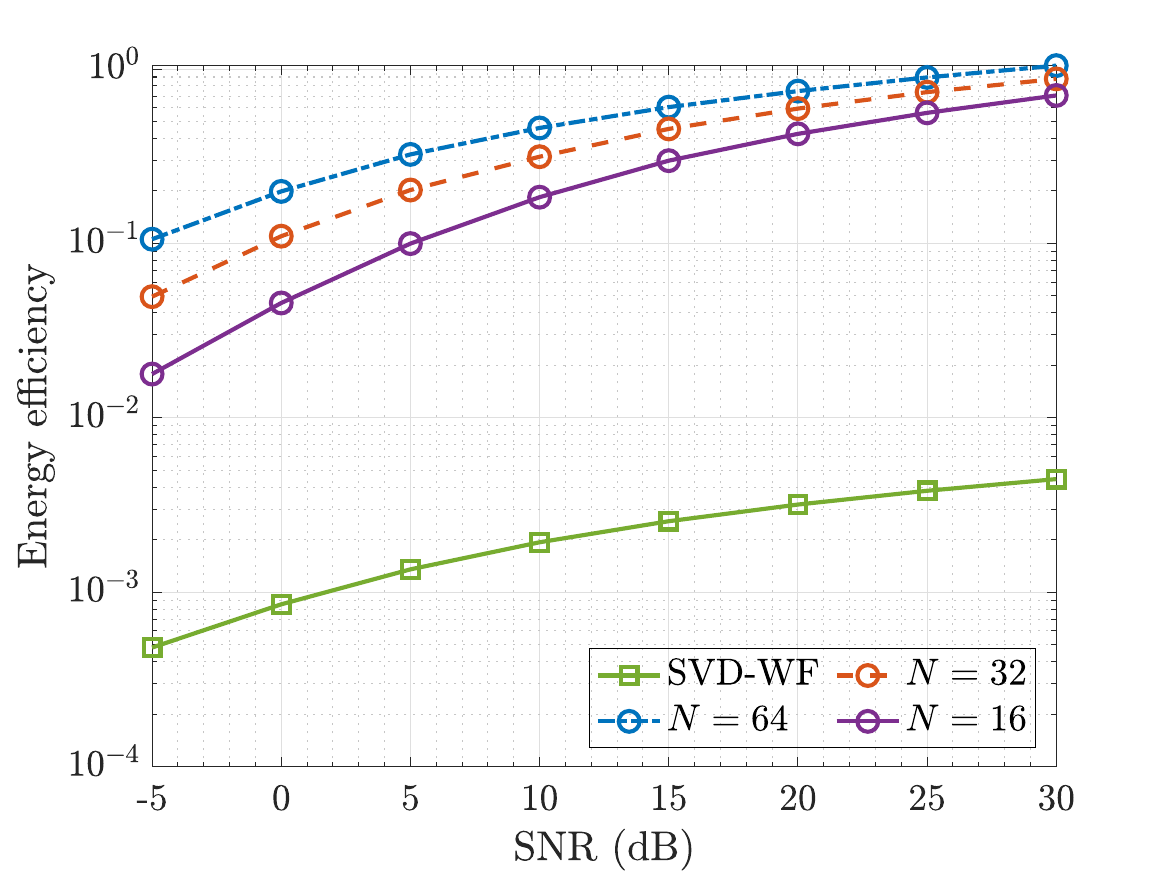}%IEEE_Access/Figures/EE_SVD.pdf
    \caption{Energy efficiency performance comparison of the proposed RIS-NOMA-aided fully analog architecture with a fully digital (SVD-WF) architecture under PCSI}
    \label{EE_SNR_SVD_Comp}  
\end{figure}

Figure \ref{EE_SNR_SVD_Comp} shows the energy efficiency performance of the proposed RIS-NOMA-aided fully analog architecture compared to SVD-WF. 
It is important to recall here that energy efficiency is the ratio of sum rate and total power consumption $P_{tot}$ where the total power consumption depends on the circuit power consumption $P_c$ and transmission power allocated to users, i.e., $P_{tot} = \sum_{k=1}^K \mathbf{p}_k + P_c,$ such that 
$P_c$ encompasses both the power consumed at the BS and the power required for reconfiguring the passive RIS elements. Hence, we model $P_c$ for 1) the proposed RIS-NOMA-aided analog architecture as $P_c = P_{\rm{BS}}^\prime + P_{\rm{RF}} + N P_{\rm{RIS}}$ and 2) the fully digital architecture as $P_c = P_{\rm{BS}}^\prime + M P_{\rm{RF}}$, where $P_{\rm{RF}}$ is power consumption per RF chain, $P_{\rm{BS}}^\prime$ is the residual circuit power consumption at the BS, and $P_{\rm{RIS}}$ is power consumption per RIS element. 
For numerical analysis, we assume typical values for the components of  $P_c$ from \cite{RIS_EE_PcMod} and \cite{EE_SVD_PcMod}. 
Figure \ref{EE_SNR_SVD_Comp} shows that the energy efficiency performance of the proposed low complexity system is significantly better in comparison to the fully digital scheme. This is due to the use of a single RF chain in the proposed system consuming significantly lower power compared to its counterpart in the fully digital system.  
%%%%%%%%%%%%%%%%%%%%%%%%%%%%%%%%%%%%%%%%%%%%%%%%%%%%%%%%%%%%%%%%%%%%%%%%%%%%

\begin{figure}[h!]
    \centering
    \includegraphics[width=.6\columnwidth]{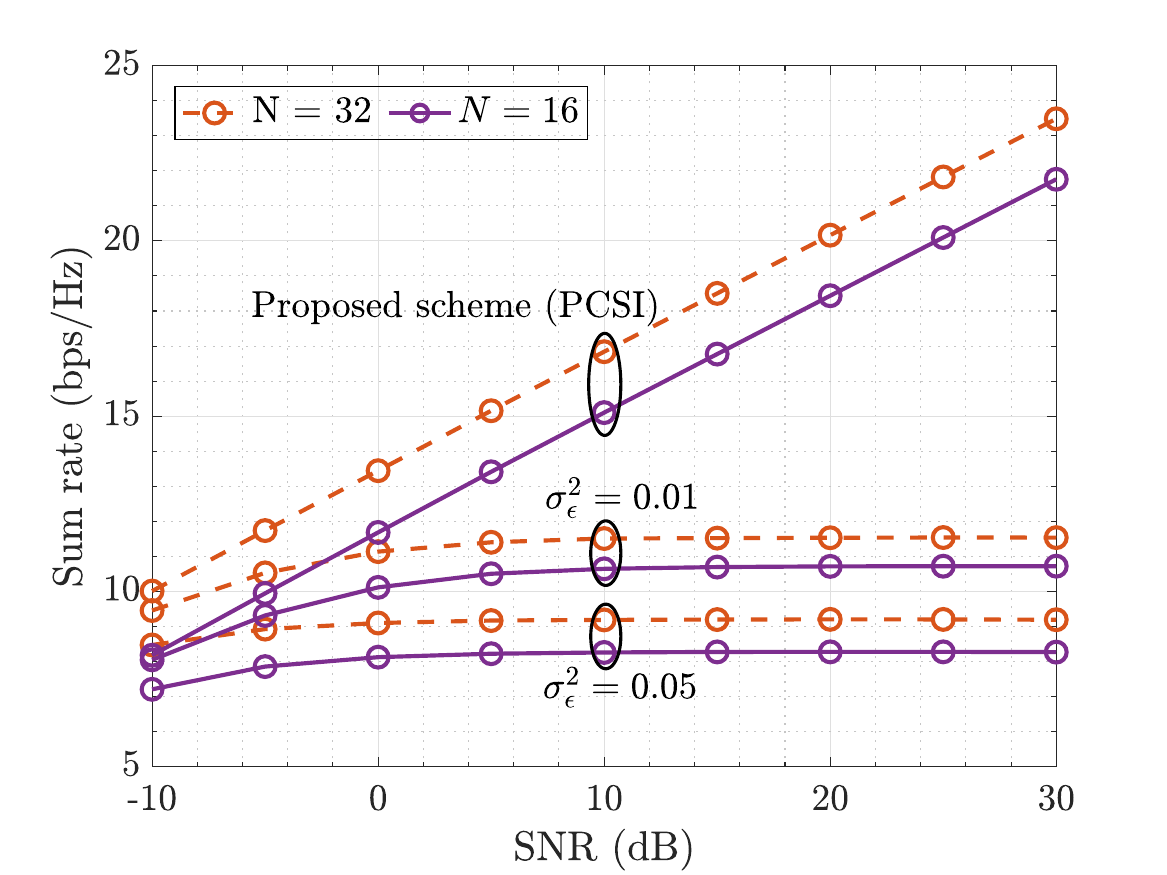}
    \caption{Sum rate vs. SNR}
    \label{SR_SNR}  
\end{figure}

Figure \ref{SR_SNR} shows the impact of SNR on the sum rate  performance under ICSI for different values of channel estimation error variance $\sigma_\epsilon^2$ and the number of RIS elements $N$. We observe that the sum rate  is smaller when the variance of ICSI is higher which is expected. This degradation can be overcome by increasing the number of RIS elements, as can be seen from the figure.  
Further, it can be observed that the sum rate saturates with an increase in SNR since SINR becomes limited by the channel estimation error at higher SNR values. This is more evident from the PCSI result when $\sigma^2_{\epsilon} = 0$ that increases linearly with SNR. It can also be seen that the sum rate  performance of PCSI and ICSI schemes are very close at low SNR due to the noise-limited behavior of the system. 
Furthermore, with increasing number of RIS elements $N$, the impact of noise can be regulated at lower SNR because of the array gain. However, from \eqref{ICSI_Power}, it can be observed that the overall channel estimation error variance also increases with $N$. From this, it can be concluded that the sum rate performance is almost independent of SNR for a given $\sigma^2_{\epsilon}$ as can be verified from the figure. 
Finally, based on the observations made from the figure, we can remark that the sum rate  performance is limited by noise in the low SNR region and by CSI error in the high SNR region. Regardless, these limits can be pushed by increasing the number of RIS elements.   
%%%%%%%%%%%%%%%%%%%%%%%%%%%%%%%%%%%%%%%%%%%%%%%%%%%%%%%%%%%%%%%%%%%%%%%%%%%%
\begin{figure}[h!]
    \centering 
    \includegraphics[width=.6\columnwidth]{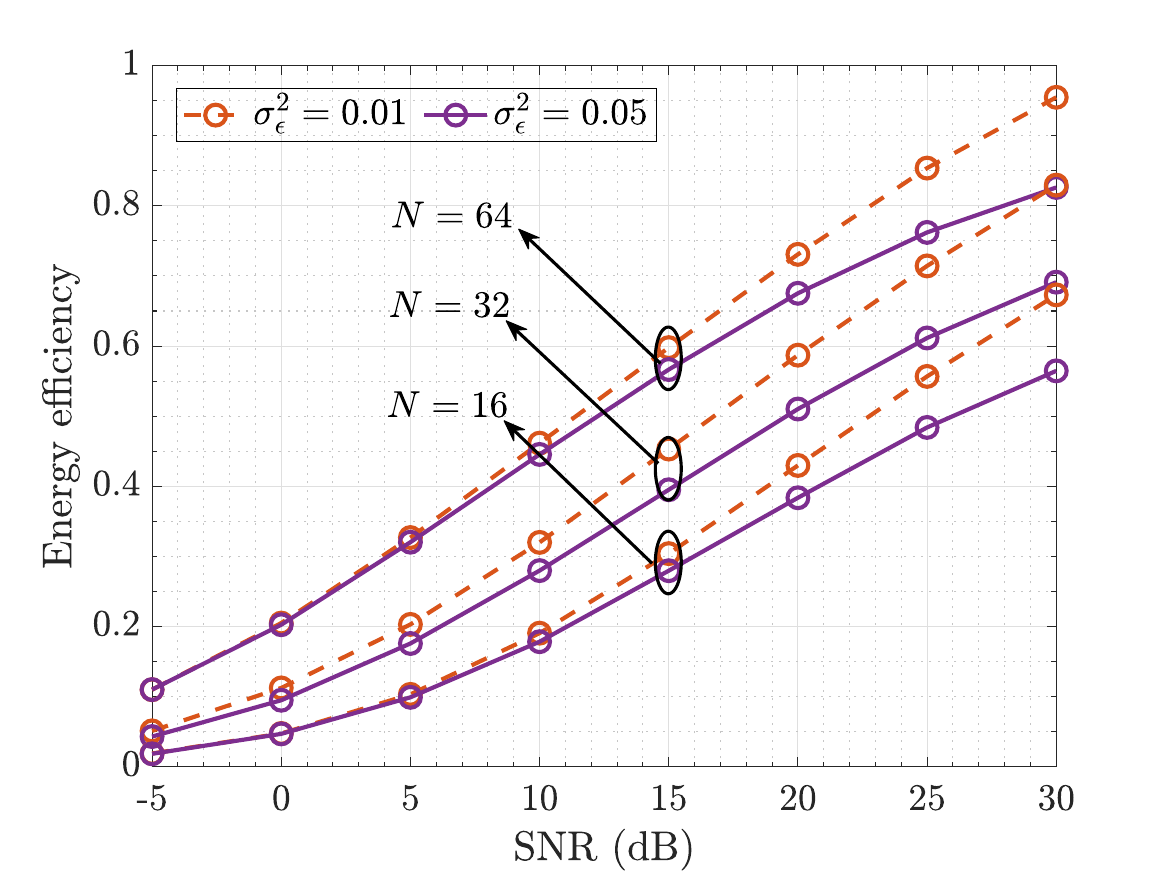}%IEEE_Access/Figures/EE_SNR.pdf
    \caption{Energy efficiency vs. SNR}
    \label{EE_SNR}
\end{figure}

Figure \ref{EE_SNR} shows the energy efficiency performance as a function of SNR under ICSI for various values of $N$ and $\sigma_\epsilon^2$. We observe that energy efficiency improves as $N$ increases due to the improvement in SNR from the increased array gain as $N$ increases. This improvement in SNR will result in reduced transmission power requirements to maintain minimum rate constraints for each user while achieving sum rate. Further, it can be seen that the energy efficiency improves with a drop in $\sigma_\epsilon^2$. This is due to the improved sum rate performance with the decrease in $\sigma_\epsilon^2$, as can be seen in the figure.

%%%%%%%%%%%%%%%%%%%%%%%%%%%%%%%%%%%%%%%%%%%%%%%%%%%%%%%%%%%%%%%%%%%%%%%%%%%%
\begin{figure}[h!]
    \centering
    \includegraphics[width=.6\columnwidth]{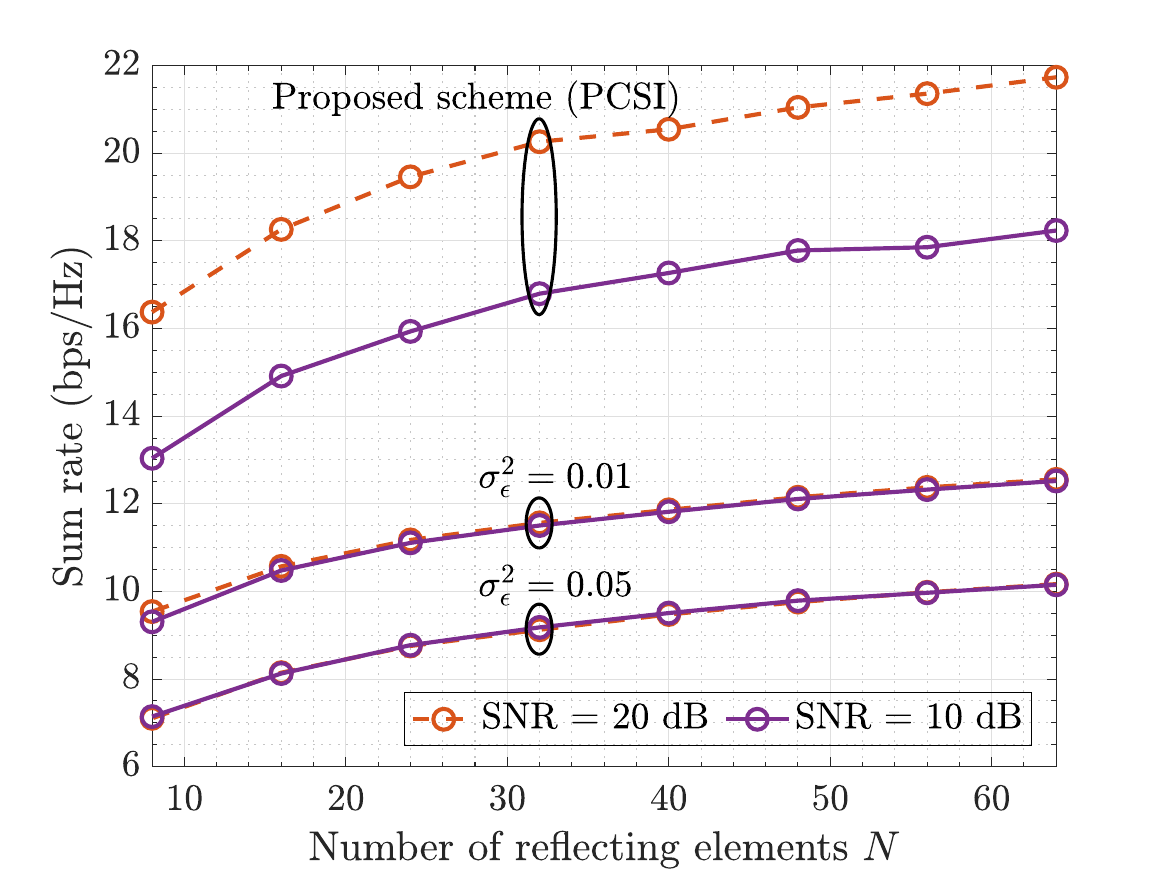}%SR_N_ICSI_SNR.pdf
    \caption{Sum rate vs. number of RIS elements $N$}
    \label{SR_N}
\end{figure}
Figure \ref{SR_N} shows that the sum rate  performance improves with respect to the number of RIS elements $N$ for various values of SNR and $\sigma_\epsilon^2$. 
The figure also shows that the sum rate  performance is more sensitive to SNR as $\sigma_\epsilon^2$ decreases toward the perfect CSI scheme with $\sigma_\epsilon^2 = 0$. This is because the noise power dominates the sum rate performance in the absence of channel estimation error variance. To elaborate further, there is a noticeable gap between the curves at different values of SNR when $\sigma_\epsilon^2 = 0$. On the other hand, the gap is negligible when $\sigma_\epsilon^2$ is higher. This is also evident from the comparison between PCSI and ICSI curves at different SNR values. Besides, the rate of increase in sum rate with respect to $N$ is also sensitive to $\sigma^2_{\epsilon}$ in a similar manner as discussed above. 
%%%%%%%%%%%%%%%%%%%%%%%%%%%%%%%%%%%%%%%%%%%%%%%%%%%%%%%%%%%%%%%%%%%%%%%%%%%%
\begin{figure}[h!]
    \centering 
    \includegraphics[width=.6\columnwidth]{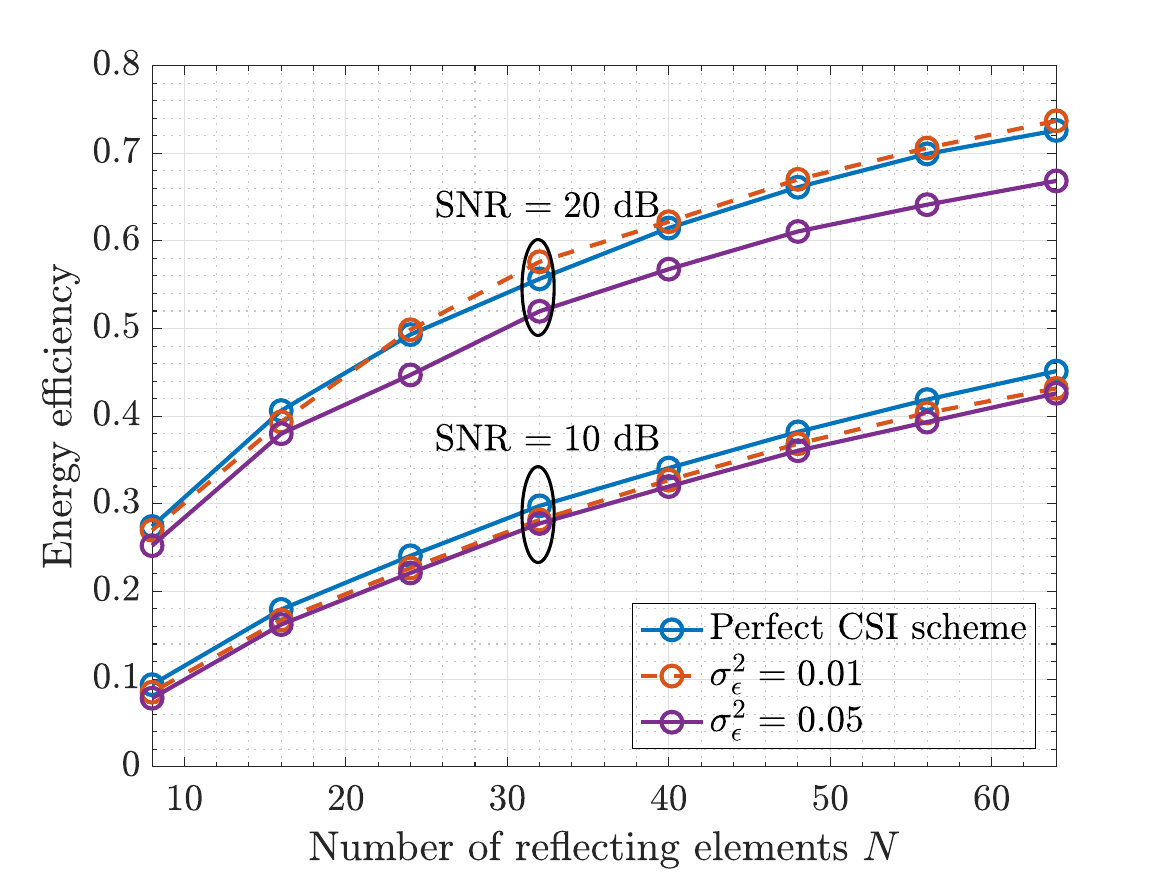}%IEEE_Access/Figures/EE_V2_N.pdf
    \caption{Energy efficiency vs. number of RIS elements $N$}
    \label{EE_N}
\end{figure}

Figure \ref{EE_N} shows that the energy efficiency performance improves consistently with increasing $N$. This is mainly due to larger RIS providing larger array gains, which aids in improving the sum rate  (while ensuring the minimum rate constraint) using a small transmission power. In essence, a higher sum rate  can be obtained at a small transmission power (or, equivalently, low SNR) with the increase in $N$. This subsequently leads to a significant gain in energy efficiency, as can be seen from the figure. Unlike the sum rate, energy efficiency is sensitive to SNR and does not vary much with respective to $\sigma^2_{\epsilon}$. 
In addition, it can also be observed that the energy efficiency is affected significantly with respect to  $\sigma^2_{\epsilon}$ at higher SNR.  

%%%%%%%%%%%%%%%%%%%%%%%%%%%%%%%%%%%%%%%%%%%%%%%%%%%%%%%%%%%%%%%%%%%%%%%%%%%%
\begin{figure}[h!]
    \centering 
    \includegraphics[width=.6\columnwidth]{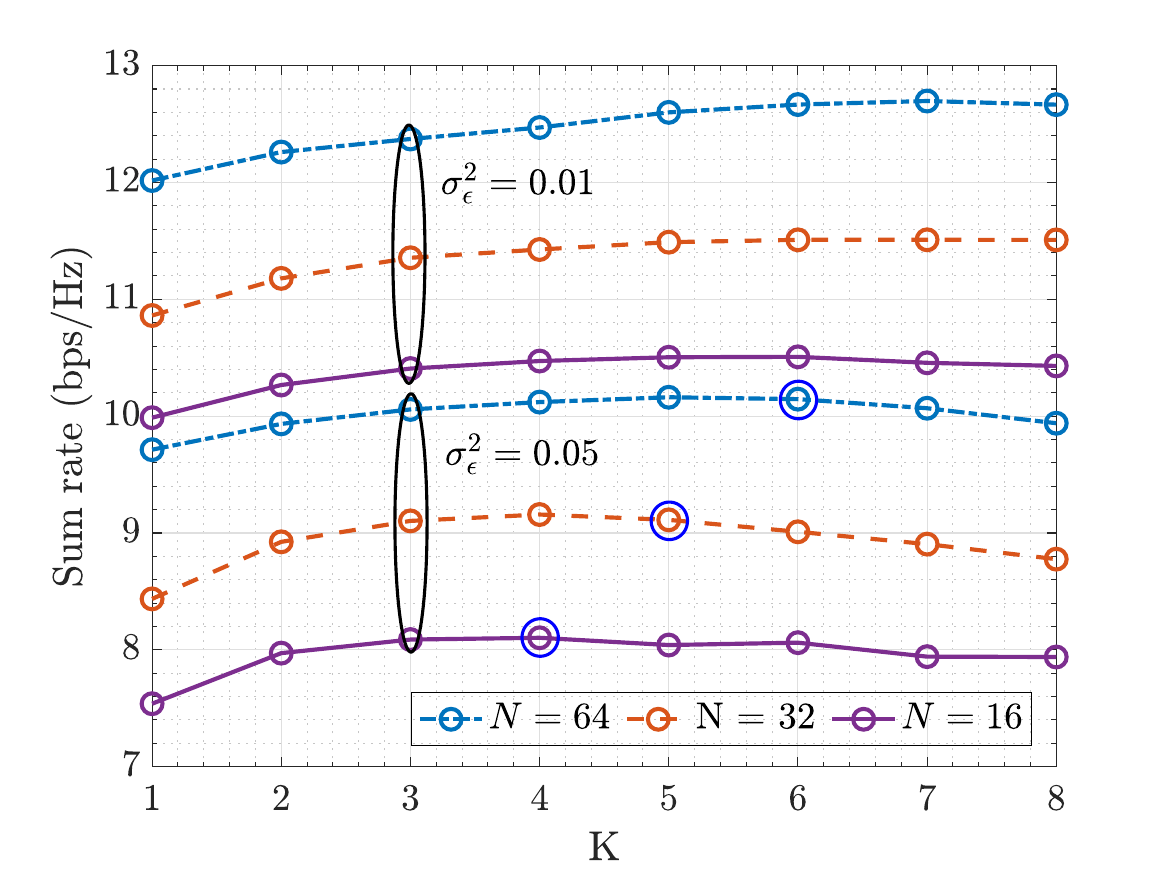}%IEEE_Access/Figures/SR_K_sub1.pdf
    \caption{Sum rate vs. number of users $K$}
    \label{SR_K}
\end{figure}

Figure \ref{SR_K} shows the impact of the number of users $K$ on the sum rate  performance for different values of $N$ and $\sigma^2_{\epsilon}$. It can be seen that the sum rate increases with respect to $K$ and eventually drops. This improvement is expected because of the increased channel diversity with increasing $K$. However, the degradation at higher $K$ is attributed to the fact associated with the minimum rate constraint and increasing inter-user interference in the presence of NOMA. Moreover, this degradation starts at higher $K$ when $\sigma^2_{\epsilon}$ is small and when $N$ is large. These starting points of degradation for $\sigma^2_{\epsilon} = 0.05$ have been roughly circled in the figure.     
\begin{figure}[h!]
    \centering 
    \includegraphics[width=.6\columnwidth]{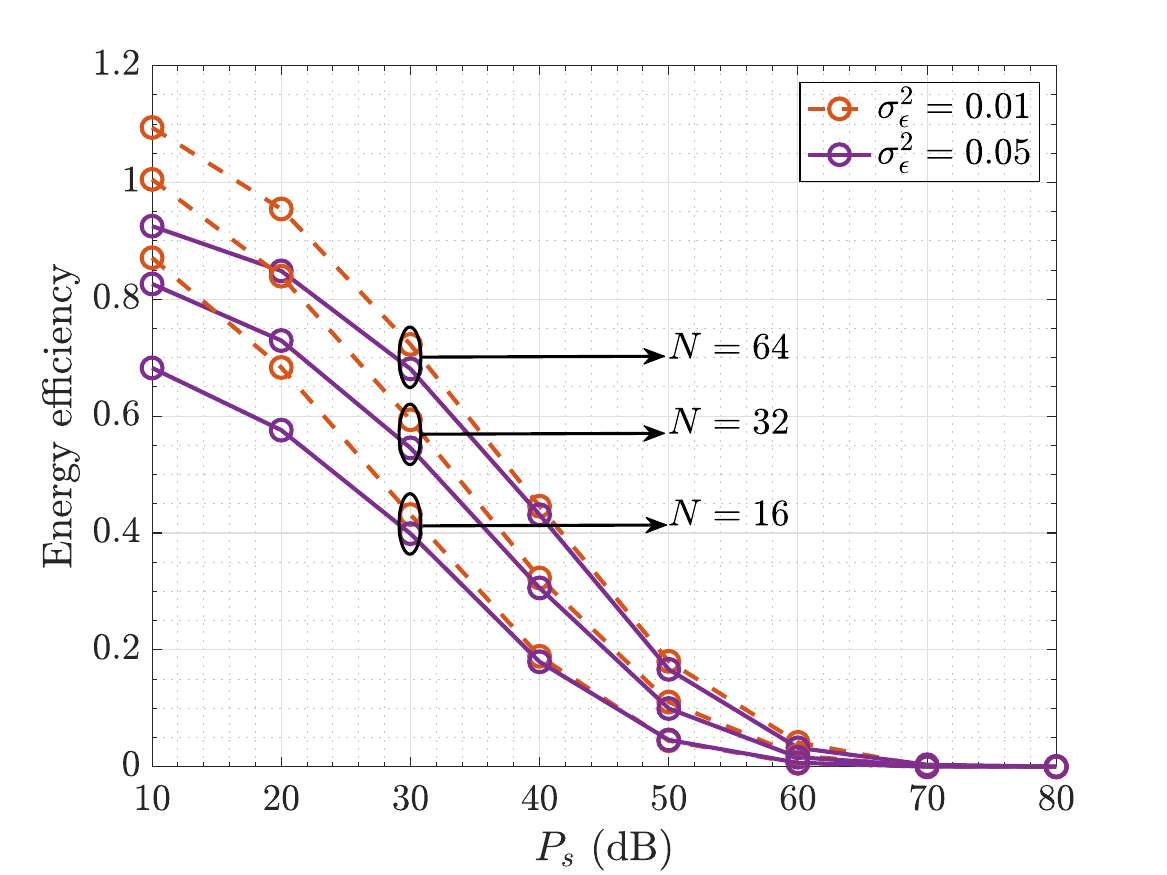}%IEEE_Access/Figures/EE_V2_Ps.pdf
    \caption{Energy efficiency vs. total transmit power $(P_s),~\rm{dB}$}
    \label{EE_Ps}
\end{figure}

The impact of transmission power $P_s$  on the energy efficiency performance is shown in Figure \ref{EE_Ps}. The figure shows that the energy efficiency performance degrades slowly with increasing $P_s$. This is because the numerator and the denominator of energy efficiency increase logarithmically and linearly with respect to $P_s$. Another interesting observation is for a fixed value of $P_s$, the energy efficiency improves as $N$ increases. This may be attributed to the fact that the array gain associated with RIS improves as $N$ increases, as discussed before, which results in low power consumption to meet the minimum rate requirements of all the users. 
\begin{figure}[h!]
    \centering 
    \includegraphics[width=.6\columnwidth]{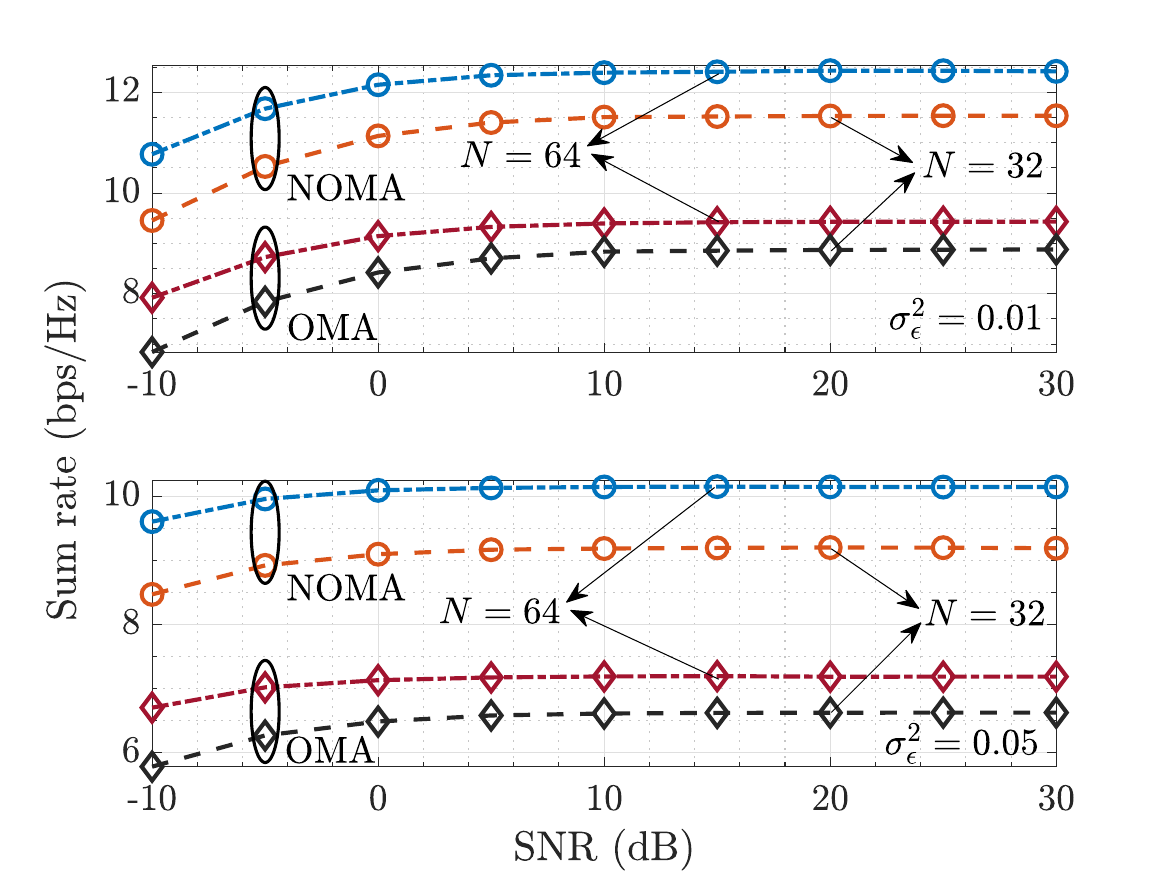}%IEEE_Access/Figures/EE_V2_Ps.pdf
    \caption{NOMA vs. OMA comparison}
    \label{SR_NOMA_OMA}
\end{figure}

In Figure \ref{SR_NOMA_OMA}, we provide the comparison between the achievable sum rate under NOMA and orthogonal multiple access (OMA) schemes. It can be observed from the figure that NOMA outperforms OMA for a wide range of $N$ and $\sigma^2_{\epsilon}$ values. This highlights the sum rate gain of a communication system enabled by spatial multiplexing. Further, we also observe that $\sigma^2_{\epsilon}$ impacts NOMA performance more in comparison to OMA.

\begin{figure}[h!]
\centering
% \vspace{-.5cm}
\begin{minipage}{.5\textwidth}
  \centering
  \includegraphics[width=\columnwidth]{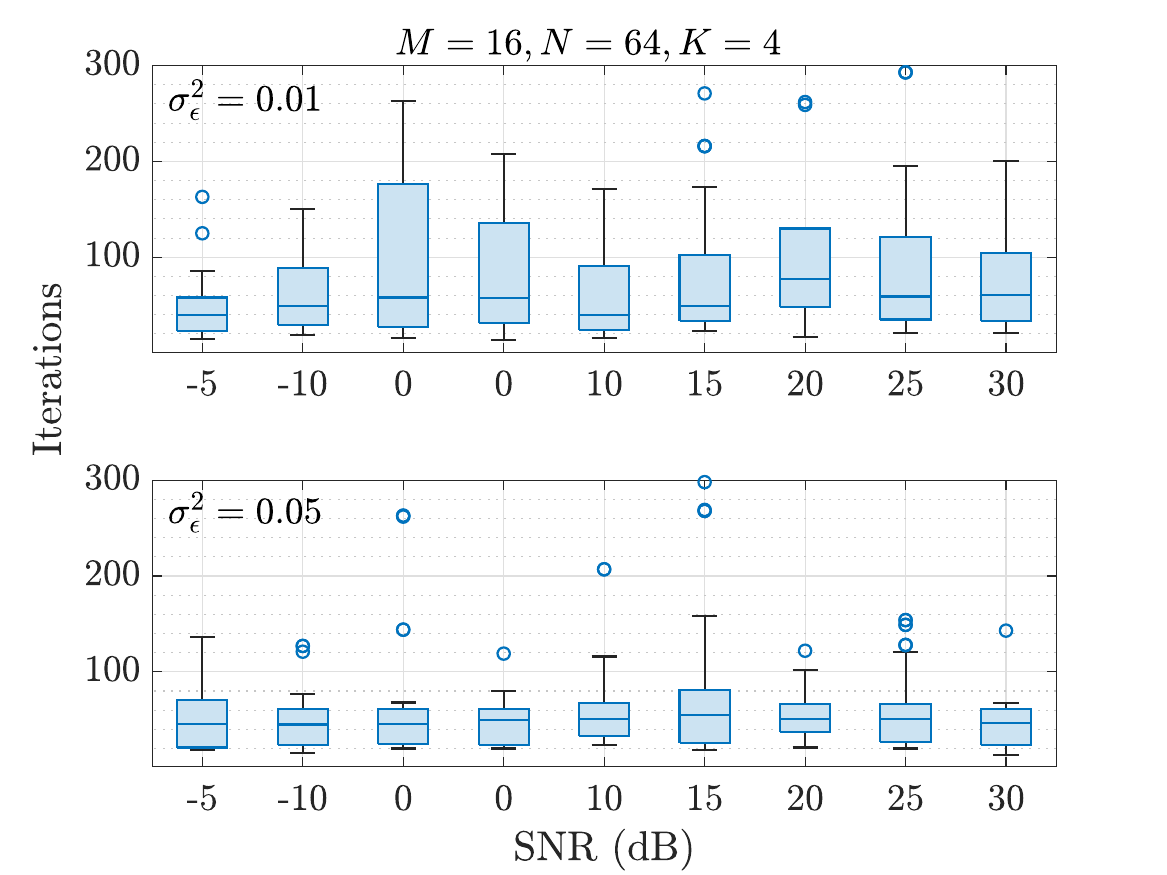}%IEEE_Access/Figures/SR_N_Convergence.pdf
\end{minipage}%
\begin{minipage}{.5\textwidth}
  \centering
  \includegraphics[width=\columnwidth]{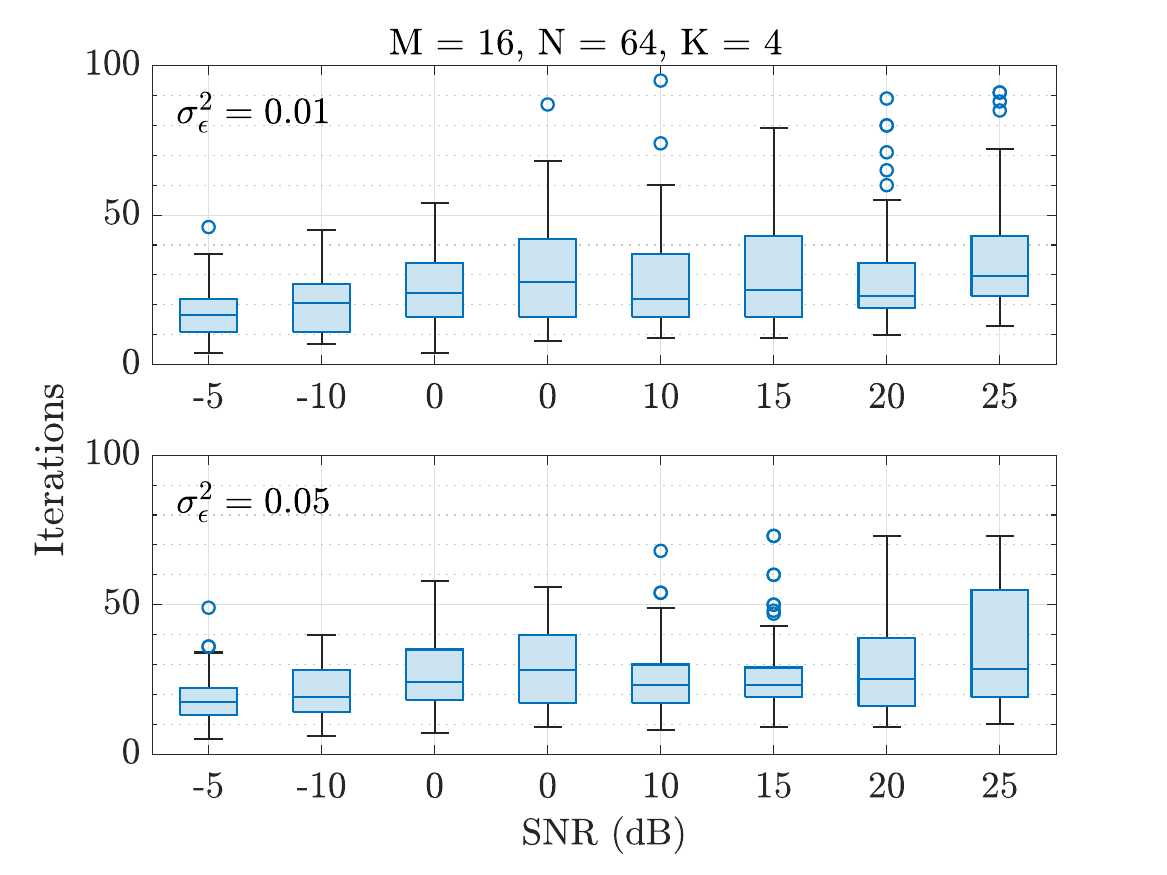}%IEEE_Access/Figures/EE_Convergence.pdf 
\end{minipage}
% \vspace{-.2cm}
\caption{Convergence of the proposed Quadratic transform-based  Algorithm 1 (left) and Algorithm 2 (right).}
\label{Convergance}
\end{figure}

Figure \ref{Convergance} shows the convergence performance of the proposed algorithms for sum rate and energy efficiency maximization. In particular, the figure visually shows the spread of the number of iterations both algorithms take to converge for $50$ channel realizations. 
The figure shows that the number of iterations required for Algorithm 1 to converge is larger in comparison to Algorithm 2.
Next, we can observe that the median value of iterations of both algorithms remains approximately the same irrespective of the SNR value. Besides, it can be observed that the spread of the number of iterations is larger when $\sigma^2_{\epsilon}$ is smaller, particularly for the sum rate maximization algorithm.
%%%%%%%%%%%%%%%%%%%%%%%%%%%%%%%%%%%%%%%%%%%%%%%%%%%%%%%%%%%%%%%%%%%%%%%%%%%%
\section{Conclusion}
In this paper, we investigate RIS integration in communication systems with low-complexity transceiver architecture. Towards this, we propose a fully analog architecture at the BS, transmitting information to multiple users. Such a system has the limitation of single-user transmission due to the single RF chain at the transmitter. Thus, to overcome this drawback, we employ NOMA to enable multi-user transmission, thereby improving the spatial multiplexing gains of such a low-complexity system. In particular, we study the sum rate and energy efficiency performances of the proposed system integrated with RIS. Thus, we formulate independent sum rate  and energy efficiency maximization problems to obtain the joint transmit beamformer, RIS phase shifts, and power allocation solutions. However, these problems are non-convex due to 1) fractional SINR term, 2) non-convex minimum rate and unit modulus RIS phase shift constraints, and 3) coupled variables. To tackle these problems, we propose a framework with three main steps: 1) apply the quadratic transform to decouple the numerator and denominator of the SINR term, 2) employ SCA and SDR to simplify the non-convex minimum rate and unit modulus phase shifts constraints respectively, and 3) AO-based iterative algorithm to decouple the transmit beamformer, RIS phase shifts, and power allocation variables. Moreover, the transmit beamformer and RIS phase shift matrix subproblems are equivalent under sum rate  and energy efficiency problems. Further, we also establish the equivalence between the quadratic transform and weighted MSE minimization under the transmit beamformer and RIS phase shift matrix subproblems. Through extensive numerical analysis, we show that the performance of the proposed RIS-aided low-complexity system outperforms the capacity-achieving fully digital system with an SVD-based precoder employing a filling algorithm for power allocation. 

\section{Future scope}
The results presented in this paper aim to be a starting point for our investigation on RIS as a solution to reduce transceiver complexity. This new RIS use case can be a remarkable solution to creating sustainable and cost-efficient systems. Our goal is to solidify further the open problem we pose as ``\textit{Can RIS reduce transceiver complexity?}" through our future works. 

\begin{appendices}
\section{Weighted MSE and Quadratic Transform Equivalence}\label{Appendix_QT_WMSE_equiv}
We recall the objective function of the sum rate maximization problem \eqref{SR} as 
$$\sum_{k = 1}^K R_k = \sum_{k = 1}^K \log_2(1+\Gamma_k).$$
We first write the WMSE reformulation of the sum rate maximization problem as follows. The mean square error is given as 
\begin{align}
e_k & = \mathbb{E} \left[ |u_ky_k - s_k|^2\right] , \nonumber \\
&=  \mathbb{E} [ |u_k \Big(\widehat{\mathbf{h}}_{k}^H \mathbf{\Phi} \widehat{\mathbf{H}}\mathbf{f}\sqrt{p_k} s_k + \sum_{i=k+1}^{K} \widehat{\mathbf{h}}_{k}^H \mathbf{\Phi} \widehat{\mathbf{H}}\mathbf{f} \sqrt{p_i} s_i \nonumber \\ &~~+ \mathbf{v}_{k}^H \sum_{i=1}^{K}\sqrt{p_i}  s_i + n_k\Big) - s_k |^2], \nonumber \\
&= 1 + \sigma^2|u_k|^2 - 2\Re\{u_k\widehat{\mathbf{h}}_{k}^H \mathbf{\Phi} \widehat{\mathbf{H}}\mathbf{f}\sqrt{p_k}\} \nonumber \\ &~~+ \sum_{i=k}^K|u_k\widehat{\mathbf{h}}_{k}^H \mathbf{\Phi} \widehat{\mathbf{H}}\mathbf{f}\sqrt{p_i}|^2 + |u_k|^2\alpha\sum_{i = 1}^{K} p_i.\label{MSE_opt}
\end{align} 
where $u_k$ represents the MSE receiver. 
Using \eqref{MSE_opt}, the equivalent WMSE reformulation of the sum rate maximization problem given in \eqref{SR} can be written as 
\begin{subequations}
    \begin{align} 
\min_{w_k,u_k,\mathbf{f},\boldsymbol{\psi},\mathbf{p}} \quad &\sum_{k=1}^{K} w_k e_k - \log_2 w_k,\label{WMSE_Obj} \\
\textrm{s.t.} \quad &\eqref{SR_Phi_Con}-\eqref{SR_MinRate_Con},%
\end{align}\label{WMSE_Opti}%
\end{subequations}
where $w_k$ represents the weights.  
Here the WMSE minimization equivalent is a function of $w_k$ and $u_k$  whose optimal values \cite{WMSE_SR_Equiv} can be obtained as $$w_k^{\rm{opt}} = (e_k^{\rm{mmse}})^{-1}$$ and $$u_k^{\rm{opt}} = \Big[ \sigma^2 + \sum_{i=k}^K |\widehat{\mathbf{h}}_{k}^H \mathbf{\Phi} \widehat{\mathbf{H}}\mathbf{f}\sqrt{p_i}|^2 + \alpha_k\sum_{i = 1}^{K}p_i \Big]^{-1} \sqrt{p_k}\mathbf{f}^H\widehat{\mathbf{H}}^H\mathbf{\Phi}^H\widehat{\mathbf{h}}_{k},$$ where $e_k^{\rm{mmse}}$ is minimum MSE (MMSE) that is obtained by substituting $u_k^{\rm{opt}}$ in $e_k$ as 
    \begin{align}
        e_k^{\rm{mmse}} &= 1 - \frac{|\widehat{\mathbf{h}}_{k}^H \mathbf{\Phi} \widehat{\mathbf{H}}\mathbf{f}\sqrt{p_k}|^2}{\sigma^2 + \sum_{i=k}^K |\widehat{\mathbf{h}}_{k}^H \mathbf{\Phi} \widehat{\mathbf{H}}\mathbf{f}\sqrt{p_i}|^2 + \alpha_k\sum_{i = 1}^{K}p_i}, \nonumber\\
        &= \frac{1}{1+\Gamma_k}
    \end{align}
Now, by substituting $w_k^{\rm{opt}}$ and $e_k^{\rm{mmse}}$ in \eqref{WMSE_Obj}, the objective becomes 
    \begin{align}
    \sum_{k=1}^K w_k^{\rm{opt}} e_k^{\rm{mmse}} - \log_2 w_k^{\rm{opt}} &= \sum_{k=1}^K 1 - \log_2 \{e_k^{\rm{mmse}}\}^{-1} \nonumber \\  
    &= \sum_{k=1}^K 1 - \log_2 \{1+\Gamma_k\} \label{WMSE_to_SR}. 
\end{align}
This ensures that the minimization of WMSE with auxiliary variables $w_k$ and $u_k$ is equivalent to maximizing the sum rate.  

We now show the equivalence of WMSE reformulation and the quadratic transform-based reformulation. Towards this, we change the variable $u_k$ to $\Tilde{u}_k$ and MSE $e_k$ to $e_k^{\rm{QT}}$, and rewrite $e_k^{\rm{QT}}$  as 
\begin{align}
e_k^{\rm{QT}} &= - \Bigg( -1 + 2\Re\{\Tilde{u}_k^*\widehat{\mathbf{h}}_{k}^H \mathbf{\Phi} \widehat{\mathbf{H}}\mathbf{f}\sqrt{p_k}\} \nonumber \\ &- |\Tilde{u}_k|^2 \{\sigma^2 + |\widehat{\mathbf{h}}_{k}^H \mathbf{\Phi} \widehat{\mathbf{H}}\mathbf{f}|^2 \sum_{i=k}^K p_i + \alpha \sum_{i=1}^K p_i\} \Bigg)\label{ek_IQT_eq}
\end{align}
The above expression is quadratic in  $\Tilde{u}_k$ and concave in optimization variables $\boldsymbol{\psi}$/$\mathbf{f}$/$\mathbf{p}$, and can be thought of as the quadratic transform with $\Tilde{u}_k$ being an auxiliary variable. We intuitively revert \eqref{ek_IQT_eq} back to its corresponding fractional representation  as 
\begin{align}
e_k^{\rm{IQT}} &= - \Bigg[ -1 + \frac{|\widehat{\mathbf{h}}_{k}^H \mathbf{\Phi} \widehat{\mathbf{H}}\mathbf{f}\sqrt{p_k}|^2}{\sigma^2 + |\widehat{\mathbf{h}}_{k}^H \mathbf{\Phi} \widehat{\mathbf{H}}\mathbf{f}|^2 \sum_{i=k}^K p_i + \alpha \sum_{i=1}^K p_i}\Bigg] \nonumber\\ \nonumber
&=  \Bigg[ \frac{\sigma^2 + |\widehat{\mathbf{h}}_{k}^H \mathbf{\Phi} \widehat{\mathbf{H}}\mathbf{f}|^2 \sum_{i=k+1}^K p_i + \alpha \sum_{i=1}^K p_i}{\sigma^2 + |\widehat{\mathbf{h}}_{k}^H \mathbf{\Phi} \widehat{\mathbf{H}}\mathbf{f}|^2 \sum_{i=k}^K p_i + \alpha \sum_{i=1}^K p_i}  \Bigg] \\ \nonumber
&= \frac{1}{1+\frac{|\widehat{\mathbf{h}}_{k}^H \mathbf{\Phi} \widehat{\mathbf{H}}\mathbf{f}\sqrt{p_k}|^2}{\sigma^2 + |\widehat{\mathbf{h}}_{k}^H \mathbf{\Phi} \widehat{\mathbf{H}}\mathbf{f}|^2 \sum_{i=k+1}^K p_i + \alpha \sum_{i=1}^K p_i}}\nonumber \\ 
&= \frac{1}{1+\Gamma_k} \label{e_IQT},
\end{align}
Further, by substituting \eqref{e_IQT} in \eqref{WMSE_Obj}, we obtain 
\begin{subequations}
\begin{align} 
\min_{w_k,u_k,\mathbf{f},\boldsymbol{\psi},\mathbf{p}} \quad & \sum_{k=1}^K w_k e_k^{\rm{IQT}} - \log_2 \{e_k^{\rm{IQT}}\}^{-1} \\
\textrm{s.t.} \quad &\eqref{SR_Phi_Con}-\eqref{SR_MinRate_Con}.%
\end{align}\label{WMSE_QT_1}%
\end{subequations}
The above objective reduces to $\sum_{k=1}^K 1 - \log_2(1+\Gamma_k)$ for $w_k^{\rm{opt}} = (e_k^{\rm{IQT}})^{-1}$, which makes \eqref{WMSE_QT_1} equivalent to the sum rate maximization problem as follows 
\begin{subequations}
\begin{align} 
\max_{\mathbf{f},\boldsymbol{\psi},\mathbf{p}} \quad & \sum_{k=1}^K  \log_2(1+\Gamma_k) \\
\textrm{s.t.} \quad &\eqref{SR_Phi_Con}-\eqref{SR_MinRate_Con},%
\end{align}%
\end{subequations}
From this, we can deduce that WMSE reformulation is equivalent to the quadratic transform for sum rate maximization problem.  
In summary, we first present the equivalence of sum rate maximization to the WMSE reformulation with auxiliary variables $w_k$ and $u_k$. Next, we revert this MSE equation back to the equivalent fractional form using the quadratic transform with the auxiliary variable $\Tilde{u}_k$.  
The reverted function that is in the fractional form is equivalent to the original objective function of the sum rate maximization problem.
Using this argument, we deduce the equivalence between the WMSE reformulation and the quadratic transform based on the fact that the sum rate expression is obtained by reverting the MSE expression into the equivalent fractional form based on the quadratic transform.
\end{appendices}
\endgroup
\bibliographystyle{IEEEtran}
\bibliography{IEEEabrv,IEEE-Transactions-LaTeX2e-templates-and-instructions/Ref_New}

\end{document}